\newtheorem{theorem}{Theorem}
\newtheorem{lemma}{Lemma}
\newtheorem{corollary}{Corollary}
\newtheorem{definition}{Definition}
\newtheorem{assumption}{Assumption}
\theoremstyle{remark}
\newtheorem{remark}{Remark}
\newcommand*\oline[1]{%
	\kern0.1em            % Counteract the inner kern
	\vbox{%
		\hrule height 0.4pt % Line above with certain width
		\kern0.15ex          % Distance between line and content
		\hbox{%
			\kern-0.1em       % Shorten the content on the left
			$#1$%             % The content, typeset math mode
			\kern-0.1em       % Shorten the content on the right
		}% end of hbox
	}% end of vbox
	\kern0.1em            % Counteract the inner kern
}
\newcommand*\uline[1]{
	\kern0.1em            % Counteract the inner kern
	\vbox{
		\hbox{%
			\kern-0.1em       % Shorten the content on the left
			$#1$%             % The content, typeset math mode
			\kern-0.1em       % Shorten the content on the right
		}% end of hbox
		\kern0.15ex
		\hrule height 0.4pt % Line above with certain width
	}% end of vbox
	\kern0.1em            % Counteract the inner kern
}
\def \cR {\mathcal{R}}
\def \cA {\mathcal{A}}
\def \cB {\mathcal{B}}
\def \cH {\mathcal{H}}
\def \cF {\mathcal{F}}
\def \cI {\mathcal{I}}
\def \cU {\mathcal{U}}
\def \cN {\mathcal{N}}
\def \cL {\mathcal{L}}
\def \indic {\mathbbm{1}}
\def \oR {\oline{R}}
\def \uR {\uline{R}}
\DeclareMathOperator{\Var}{\mathrm{Var}}
\renewcommand{\qedsymbol}{\IEEEQED}
\begin{document}

	\title{A Fundamental Limit of Distributed Hypothesis Testing Under Memoryless Quantization}
	\author{\IEEEauthorblockN{Yunus \.Inan, \IEEEmembership{Student Member, IEEE}, Mert Kayaalp, \IEEEmembership{Student Member, IEEE},\\ Ali H. Sayed, \IEEEmembership{Fellow, IEEE}, Emre Telatar, \IEEEmembership{Fellow, IEEE}}\thanks{The authors are with  \'{E}cole Polytechnique F\'{e}d\'{e}rale de Lausanne (EPFL), 1015 Lausanne, Switzerland. Emails: \{yunus.inan, mert.kayaalp, ali.sayed, emre.telatar\}@epfl.ch.\newline A short version of this work is presented at IEEE ICC 2022 \cite{ConfVersionInan}.}}
	\maketitle 
	
	\begin{abstract}
		 We study a distributed hypothesis testing setup where peripheral nodes send quantized data to the fusion center in a memoryless fashion. The \emph{expected} number of bits sent by each node under the null hypothesis is kept limited. We characterize the optimal decay rate of the mis-detection (type-II error) probability provided that false alarms (type-I error) are rare, and study the tradeoff between the communication rate and maximal type-II error decay rate. We resort to rate-distortion methods to provide upper bounds to the tradeoff curve and show that at high rates lattice quantization achieves near-optimal performance. We also characterize the tradeoff for the case where nodes are allowed to record and quantize a fixed number of samples. Moreover, under sum-rate constraints, we show that an upper bound to the tradeoff curve is obtained with a water-filling solution.
	\end{abstract}
	
	\begin{IEEEkeywords}
		distributed hypothesis testing, memoryless quantization, expected rate constraints, type-II error exponent, rate-distortion methods, lattice quantization
	\end{IEEEkeywords}
	
	\section{Introduction}
	A present-day engineering challenge is to cope with amounts of data whose processing requirements exceed the processing capabilities of a single device. Therefore, many state-of-the-art designs share the workload (e.g., data acquisition) among peripheral nodes (e.g., sensors). One particular challenge is the distributed hypothesis testing problem. In this paradigm, peripheral nodes acquire partially informative observations and transmit their information to a fusion center. The center aims to infer the state-of-nature by aggregating information from different nodes. If there were no communication constraints, the fusion center would be able to observe the raw data and perform an optimal test. However, in practice, there may exist communication constraints due to channel imperfections or processing capabilities of the nodes.
	
	To illustrate this situation, consider a vehicle equipped with a collision avoidance system that relies on vehicular communication. In such a communication scheme, information can be received from other vehicles (vehicle-to-vehicle, V2V) or from other objects such as mobile phones, base stations etc. (vehicle-to-everything, V2X).  The collision avoidance system is activated upon detection of a possible collision --- this risky state might be associated with the alternative hypothesis in a binary hypothesis testing setting. V2V and V2X communication protocols, e.g., IEEE 802.11p \cite{IeeeStd}, limit the data rate. Hence, the devices in proximity are required to compress or quantize the data they possess before sending it to the vehicle.

%	To illustrate, consider an autonomous vehicle equipped with sensors that acquire data about the surroundings and the internal state. These sensors are of wide variety, e.g., ultrasonic sensors, engine speed sensors, etc. The central processing unit of the vehicle relies on information received from these sensors in order to perform critical tasks such as activating the emergency brake in a collision avoidance system --- this risky state might be associated with the alternative hypothesis in a binary hypothesis testing setting. However, the communication rate between the sensors and the central processing unit is limited due to 
%	
%	Transmission of high-rate data to the center requires reliable media such as cables and wires. However, additional wiring infrastructure increases the weight of a car and hence increases the fuel usage \cite{ahmed2007}. Reducing the number of wires is possible by constraining the communication rate of sensors, which ultimately leads to greener vehicles with low emission.
	
	The example above can be cast into a canonical distributed hypothesis testing problem under communication constraints. Among many possible ways of restricting communication, we choose to limit the \emph{average} number of bits sent under riskless or ordinary state, which associates with the null hypothesis; and we seek the fundamental limits of a distributed hypothesis testing problem under such assumption. We focus on the case where nodes compress their data with practically-appealing memoryless quantization procedures. More precisely, under such setting, we initially focus on the single-node case and when the average number of bits sent is at most $R$ under the null hypothesis:
	\begin{itemize}
		\item we characterize the optimal decay rate of the type-II error probability under vanishing type-I error probability, given by $\theta^*(R)$, in Theorem \ref{thm:theta} of Section \ref{sec:best_performance_a};
		\item we obtain an upper bound to $\theta^*(R)$ via rate-distortion methods and consequently characterize an unachievable region in Corollary \ref{cor:upper_bnd} of Section \ref{sec:upper_bnd};
		\item we show that with simple lattice-quantization, the upper bound can be approached within $\frac 1 2 \log_2(\frac{\pi e}{2}) \approx 1.047$ bits in Theorem \ref{thm:lattice} of Section \ref{sec:high_rate}; 
		\item we provide the upper bound $\theta_k(R)$ for the $k$-dimensional vector quantization case in Section \ref{sec:multiple_quant}.
	\end{itemize}
	The results for the single-node case are then extended to multiple nodes in Section \ref{sec:multiple_node}, where the problem is formulated under individual communication constraints at nodes, together with a sum-rate constrained formulation.
	
	\section{Related Work}
	The present work contains flavors from both information theoretic and signal processing approaches. We review the related work under these two approaches respectively.
	\subsection{Information Theoretic Approaches}
	Distributed hypothesis testing under communication constraints is a long-standing problem studied by the information theory community. An early work by Ahlswede and Csiszár \cite{Ahlswede} underlies most of the subsequent developments. It is therefore instructive to review their problem setup for a better understanding of the subsequent work. Their setup --- henceforth referred to as the Ahlswede--Csiszár setup --- is as follows. A remote node possesses a sequence $X^n$, while the decision maker possesses a $Y^n$. The pair $(X^n,Y^n)$ is independent and identically distributed (i.i.d.) with distribution $P$ under the null hypothesis ($\cH_0$) and with distribution $Q$ under the alternative hypothesis ($\cH_1$). The decision maker estimates the true hypothesis by using both $Y^n$ and an $nR$-bit side information conveyed by the remote node. The communication constraint is ``hard'' in the sense that $X^n$ is represented with exactly $nR$ bits under both hypotheses. Their aim is to find the fastest exponential decay rate of the type-II error given a prescribed type-I error probability, say $0 < \epsilon < 1$. It turns out that the fastest decay rate does not depend on $\epsilon$, and it is fully characterized for the special case of dependence testing, i.e., when $Q_{XY} = P_XP_Y$ where $P_X$, $P_Y$ are the marginals of $X$ and $Y$ under $\cH_0$. The characterization of the optimal decay rate for the general case turns out to be more involved and it is still unknown although some upper and lower bounds exist.
	
	The Ahlswede--Csiszár setup motivated various subsequent works on distributed hypothesis testing. For instance, \cite{Han_1} presents tighter lower bounds on the optimal decay rate for the Ahlswede--Csiszár setup and further extends the formulation to include zero-rate compression (see also \cite{shalaby}), as well as to include the compression of $Y^n$. The lower bound on the best possible decay rate for the general case is improved in \cite{shimokawa1994}. One may refer to \cite{Han_2} for a comprehensive survey on the literature considering Ahlswede--Csiszár setup and its variants. Subsequent works on communication-constrained hypothesis testing include studies on tradeoffs between type-I and type-II error exponents \cite{Han_exp,weinberger2019}, performance under finite-blocklength regime \cite{watanabe2017}, and under noisy communication \cite{tuncel,sreekumar2020,broadcast_dependence,broadcast}. Further extensions of this problem include interactive protocols \cite{katz,interactive,escamilla}, privacy constraints \cite{mhanna,gunduz_privacy,tan_privacy,liao2018,abbasalipour2022privacy,liao2017privacy}, the additional task of data reconstruction at the receiver \cite{debbah}. For dependence testing, \cite{rahman} concludes that binning schemes are optimal; whereas the recent work \cite{watanabe2022} shows that the performance can be improved with sequential methods for the general case.
	
	The works cited above elaborate on the ``hard'' communication constraints, as Ahlswede and Csiszár did. A recent strand of works relax the ``hard'' communication constraints and study the dependence testing problem by limiting the \emph{expected} number of bits sent. A partial list of the studies adopting this perspective is \cite{Wigger_1,Wigger_2,hamad:hal-03349810,hamad2021,hamad:hal-03349654,benefits}. The current study is also in line with this perspective. We remark that for the special case of dependence testing, since the $X$ marginals are the same under both hypotheses, the expected number of bits conveyed does not depend on the true hypothesis. Our work, on the other hand, focuses on the general case. Hence, given a strategy, the expected number of bits sent might differ under the null and alternative hypotheses; introducing an inherent asymmetry to the problem. We choose to limit the \emph{expected rate under the null hypothesis $\cH_0$}. This choice aligns with the view that $\cH_1$ is a rare high-risk event and necessary communication must take place to detect this event with high probability. A more detailed discussion on such choice is given in Section \ref{sec:problem_formulation}.
	
	Information theoretic approaches could be criticized because they use high-dimensional vector quantization, i.e., the entire block $X^n$ should be observed before being represented with $nR$ bits. A system designed as such may not be desirable for timing-crucial applications, as the decision maker is kept oblivious of the side information until time $n$. Furthermore, for large $n$, such a system is not memory-efficient as the remote node records the whole past and it might also be computationally expensive to compress $X^n$. These observations suggest that low-dimensional quantization could be of interest for low-latency and memory-efficient applications. Such quantization procedures for distributed detection are often studied in the signal processing literature.
	
	\subsection{Signal Processing Approaches}
	
	As mentioned, signal processing approaches are usually centered around low-dimensional quantizer designs. The scalar quantization procedures specialized for the task of binary hypothesis testing aim to keep the dissimilarity between the distributions of the quantizer output under $\cH_0$ and $\cH_1$ as large as possible while representing the output only with $R$ bits. Various methods for evaluating the dissimilarity include calculation of the Kullback--Leibler divergence $D(\cdot||\cdot)$ --- the optimal type-II error rate under vanishing type-I error \cite{Chernoff} or vice versa --- or one may consider the more general Ali--Silvey distances \cite{alisilvey} (or equivalently $f$-divergences \cite{csiszar_shields}) which prove useful for a variety of signal detection problems \cite{kailath}. Notable early studies on quantization for binary hypothesis testing include \cite{kassam,poor_thomas,Gray}. 
	
	Finding the optimal quantizer is in general a daunting task and there is no standard machinery to obtain such quantizers. However, there exists iterative methods
	to find suboptimal quantizers as in \cite{Gray}, or studies on the high-rate quantization regime \cite{poor_highrate,Hero,villard2011}. Some extremal properties of likelihood-ratio quantizers is given in \cite{tsitsiklis_extremal}. Quantizer designs based on privacy and secrecy constraints are studied in \cite{varshney_privacy,mhanna2016}. Error resilient designs are studied in \cite{savas,chen,viswanathan}, as well as Byzantine resilient designs in \cite{marano2009}. A recent work on multilevel quantization is \cite{gul2021}.
	
	A similar trend to that in the information theoretic studies is also observed in the signal processing literature --- the works cited above rely on ``hard'' communication constraints. Different from the existing signal processing literature, we study the fundamental limits under memoryless (scalar) quantization with \emph{expected} rate constraints under $\cH_0$, and provide impossibility results for the subject case. Namely, if the \emph{expected} rate under $\cH_0$ is limited to $R$ bits, then the type-II error rate cannot be greater than $\theta^*(R)$ --- defined in Theorem \ref{thm:theta} --- under vanishing type-I error probability.\\

	\noindent \textbf{Notation:} Random variables are denoted with uppercase letters whereas their realizations are written lowercase, e.g., $X_n$ and $x_n$. $\cB{(\mathbb{R})}$ denotes the Borel algebra of $\mathbb{R}$. For probability measures $P$ and $Q$, $D(P||Q)$ denotes the Kullback--Leibler (KL) divergence and $E_P[\cdot]$, $H_P(\cdot)$, $I_P(\cdot\,;\cdot)$ denote the expectation, entropy, and mutual information under $P$ respectively. All logarithms are taken with natural base unless explicitly stated.

	\section{Problem Formulation}\label{sec:problem_formulation}
	In this work, we study a canonical distributed hypothesis testing setup where $m$ peripheral nodes communicate with a fusion center (Figure \ref{fig:network}). At each time instant $t$, the node $i$ observes data arising from distribution $P^{(i)}$ under the null hypothesis $\cH_0$, and from distribution $Q^{(i)}$ under the alternative hypothesis $\cH_1$. We assume that for all $i$, $P^{(i)}$ is absolutely continuous with respect to $Q^{(i)}$. That is, if $Q^{(i)}(B) = 0$, then $P^{(i)}(B) = 0$ for any $B \in \cB(\mathbb{R})$. The data is independent across nodes, and across time under both hypotheses. Moreover, the data is identically distributed across time. Therefore, the joint distribution of the network until time $t$ and under $\cH_0$ can be characterized on rectangles in $\mathbb{R}^{tm}$ as follows:
	\begin{equation}\label{eq:joint1}
	P(B^t) = \prod_{\tau = 1}^t\prod_{i = 1}^m P^{(i)}( [a_\tau^{(i)}, b_\tau^{(i)}])
	\end{equation}
	where $B_{\tau} := [a_\tau^{(1)}, b_\tau^{(1)}] \times\dots \times [a_\tau^{(m)}, b_\tau^{(m)}]$ and $B^t := B_1\times \dots \times B_t$ are rectangles in $\mathbb{R}^{m}$ and $\mathbb{R}^{tm}$ respectively. By a standard extension theorem, \cite[Theorem 1.7]{Martingales}, $P$ can be extended uniquely to $\cB(\mathbb{R}^{tm})$. Under $\cH_1$, since the independence assumptions are the same, the joint distribution of the network is given exactly by \eqref{eq:joint1}, with $P$'s replaced by $Q$'s.
	
	\begin{figure}[h]
		\centering
		\includegraphics[scale=0.8]{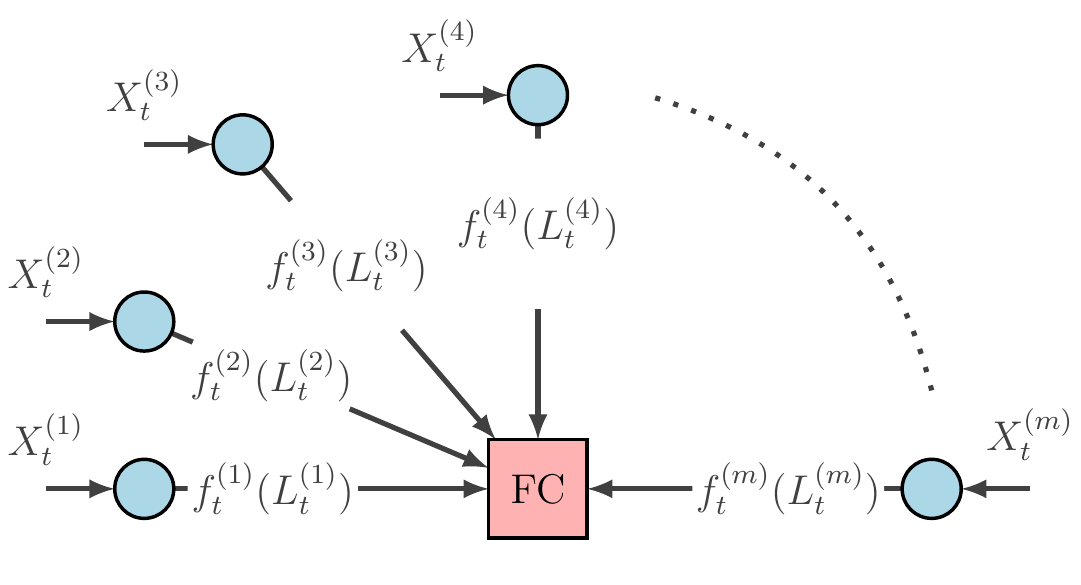}
		\caption{A representation of the setup studied in this work. The peripheral nodes are drawn as blue circles, and the fusion center (FC) is drawn as the red square. At each time instant $t$, node $i$ sends its compressed score $f_t^{(i)}(L_t^{(i)})$, which is solely based on the fresh observation $X_t^{(i)}$.}\label{fig:network}
	\end{figure}
	
	A key assumption in our setup is that \emph{each node $i$ is only aware of $P^{(i)}$ and $Q^{(i)}$, and the fusion center does not have any knowledge about the statistics of the data observed at the nodes}. Such assumption distinguishes our work from many information-theoretic approaches. For instance, in the Ahlswede--Csiszár setup, both the remote observer and the decision maker are aware of the joint distribution. By contrast, our oblivious fusion center trusts the nodes blindly and sums the ``scores" sent by them. Knowing this behavior of the center, nodes prepare their scores accordingly. An example of a score might be the log-likelihood ratio (LLR) of the data observed at time $t$, i.e., node $i$ calculates the LLR $L_t^{(i)}$ based on its freshly observed data $X_t^{(i)}$ as
	\begin{equation}
	L_t^{(i)} := \log\frac{dP}{dQ}(X_t^{(i)}),
	\end{equation}
	sets the score $S_t^{(i)} = L_t^{(i)}$, and passes it through the communication link. Note that the above LLR is well-defined as a Radon--Nikodym derivative due to the absolute continuity of $P^{(i)}$ with respect to $Q^{(i)}$. Suppose each node behaves similarly, i.e., calculates and sends its LLR. Since the data is independent across nodes and across time, under such a strategy, the fusion center receives the sufficient statistic $\sum_{\tau = 1}^t\sum_{i = 1}^m L_\tau^{(i)}$ and is able to perform an optimal test, i.e., a Neyman--Pearson test. However, $P^{(i)}$ and $Q^{(i)}$ can be continuous in general and it is impossible to (i) calculate the LLR with an arbitrarily high precision and (ii) represent the score losslessly with a finite number of bits. Due to these restrictions, the nodes are required to compress (quantize) the data they receive, and send their scores with a finite number of bits at each time instant. The finite-bit score sent by node $i$ at time $t$ is represented by $S_t^{(i)}$ and the fusion center performs a threshold test based on the average score
	\begin{equation}
	\bar S_t := \frac 1 t \sum_{\tau = 1}^t\sum_{i = 1}^m S_\tau^{(i)},
	\end{equation}
	with the estimate being the result of the following test:
	\begin{equation}\label{eq:threshold_test}
	\hat \cH = 
	\begin{cases}
	\cH_0,& \bar S_t \geq \eta_t\\
	\cH_1,& \text{else}
	\end{cases}
	\end{equation}
	where $\eta_t$ is a threshold that can depend on $t$. In addition to the finite-bit constraint, the rate of communication between the nodes and the center may be subject to limitations. As mentioned in the previous section, in this work, we study the distributed hypothesis testing problem under the following communication constraint: \emph{The average number of bits sent under $\cH_0$ must be kept limited.}\\
	
	\begin{remark}\label{rem1} The communication constraint is not symmetric, i.e., there is no constraint under $\cH_1$. This aligns well with many real-world scenarios when $\cH_1$ represents a high-risk situation in which the system is allowed to violate communication constraints in order to identify the risk --- responding to an emergency takes priority over communication constraints ---  recall the collision avoidance example at the beginning of this manuscript. This view of $\cH_1$ also implies that the type-II error must be very rare. In fact, in many hypothesis testing problems, it is desired that the type-II error decays exponentially. This is the approach we follow for the rest of this work.\hfill\qedsymbol
	\end{remark}
	
	\subsection{Memoryless Quantization and the Communication Constraint}
	
	For simplicity, we focus on a single node $i$ at the moment and omit the symbol $(i)$ from the superscripts. In this section, we formally define the memoryless quantization procedures that map the LLR $L_t$ to the score $S_t$, and the communication constraints for such quantization procedures. We first give the definition of a simple function.
	\begin{definition}[Simple function, \cite{Rudin}]\label{def:simple} A function on $\mathbb{R}$ that takes finitely many values is called a simple function. More precisely, let $\alpha_1,\dots,\alpha_n$ be the distinct values of a simple function $f$, then any such $f$ is represented as 
		\begin{equation}\label{eq:simple}
		f(l) = \sum_{k=1}^n\alpha_k\indic\{l \in B_k\}
		\end{equation}
		where $B_1,\dots,B_k \in \cB({\mathbb{R}})$ form a partition of $\mathbb{R}$.\hfill\qedsymbol
	\end{definition}
	We let $S_t = f_t(L_t)$ with a simple function $f_t$. Observe that such procedures are \emph{memoryless} --- quantization at time $t$ depends only on the data arriving at time $t$, and does not depend on past. From Definition \ref{def:simple}, it is clear that $S_t$'s are discrete random variables. For example, if $f_t$ is set as in \eqref{eq:simple}, then for $1 \leq k \leq n$:
	\begin{equation}
	P(S_t = \alpha_k) = P(L_t \in B_k)
	\end{equation}
	and the discrete entropy of $S_t$ under $\cH_0$ is defined as
	\begin{equation}
	H_P(S_t) := -\sum_{k=1}^n P(S_t = \alpha_k)\log P(S_t = \alpha_k)
	\end{equation}
	with $0\log 0: = 0$.
	A discrete random variable can be compressed \emph{losslessly} with a binary code whose expected length is $\ell$, which is bounded as \cite{Wyner,Alon}
	\begin{equation}\label{eq:lossless}
	H_P(S_t)\log_2e - \log_2(H_P(S_t)\log_2e + 1)
	-\log_2 e\leq \ell \leq H_P(S_t)\log_2e.
	\end{equation}
	Therefore, the peripheral node can compress its LLR $L_t$ with a simple function $f_t$, and can represent its score $S_t = f_t(L_t)$ with an average number of bits less than $H_P(S_t)\log_2e$ under $\cH_0$. If we impose
	\begin{equation}\label{eq:rate_constraint}
	\frac 1 t \sum_{\tau = 1}^t H_P(S_\tau) \leq R/\log_2e,
	\end{equation}
	all scores until time $t$ can be represented with an expected number of bits less than $Rt$ under $\cH_0$; and the average number of bits sent over the communication link is kept limited to at most $R$ bits. Constraints formed as in \eqref{eq:rate_constraint} are then suitable candidates for being the communication constraint in our distributed hypothesis testing setting.\\
	
	\begin{remark} The memoryless quantization procedures we consider are practically appealing since the peripheral devices can be designed in a memory-efficient manner. Moreover, the assumption that each node only knows their own $P$'s and $Q$'s allows independent design of the peripheral nodes, as opposed to the joint design of all sensors which may be impractical. Note that without independence across the nodes, joint design might be necessary. We assume that the network subject to this study is designed such that the peripheral nodes have a spatial configuration that yields, or at least approximates, independence across nodes.\hfill\qedsymbol
	\end{remark}
	\subsection{Performance Criteria under Memoryless Quantization}\label{sec:performance}
	As mentioned earlier, the fusion center decides over the hypotheses based on the threshold test given in \eqref{eq:threshold_test}. Under this test, the type-I and type-II error probabilities are defined respectively as
	\begin{equation}\label{eq:errors_def}
	\begin{split}
	\alpha_t &:= P(\bar S_t < \eta_t)\\
	\beta_t &:= Q(\bar S_t \geq \eta_t)
	\end{split}.
	\end{equation}
	For an $\epsilon>0$, we assume that the fusion center sets the threshold to
	\begin{equation}\label{eqn:threshold}
	\eta_t =\frac 1 t \sum_{\tau = 1}^t E_P[S_t] - \epsilon.
	\end{equation} 
	Recalling that the fusion center is unaware of the statistics at the nodes, one might argue that this choice of $\eta_t$ is not valid. However, such adjustment is without loss of generality: If the nodes send the centered version of the scores, i.e., $S_t-E_P[S_t]$, and if the fusion center performs the test based on $\eta_t = -\epsilon$, the performance of the scheme will be equivalent to the scheme where the threshold is chosen as in \eqref{eqn:threshold}. Note that setting $\eta_t = -\epsilon$ does not require any knowledge on $P$'s and $Q$'s. Moreover, the discrete entropy $H_P(S_t)$ does not change under any shift and the communication constraints are not violated. It turns out that the choice in \eqref{eqn:threshold} achieves the optimal curve $\theta^*(R)$ --- to be defined in Theorem \ref{thm:theta} --- and we keep this choice for the rest of the work.
	
	\section{Best Performance Under Memoryless Quantization}
	
	\subsection{Boundary of the Achievable Region}\label{sec:best_performance_a}
	In view of Remark \ref{rem1}, our aim is to drive the type-II error probability to zero as fast as possible while ensuring the type-I error probability  vanishes. In particular, the type-II error probability must decay exponentially. A suitable definition of an achievable region in line with this perspective is given as follows.
	\begin{definition}\label{def:achievable_single}
		Given $P$ and $Q$, $(R,\theta)$ is an achievable pair if there exists a sequence $\{f_t\}$ of simple functions and thresholds $\{\eta_t\}$ such that 
		\begin{itemize}
			\item[(a)] $\frac 1 t \sum_{\tau = 1}^t H_P(S_\tau) \leq R$, for all $t$
			\item[(b)] $\lim_{t \to \infty}\alpha_t = 0$
			\item[(c)] $\liminf_{t \to \infty}\frac 1 t \log \frac 1 {\beta_t} \geq \theta$
		\end{itemize}
		where $S_t = f_t(L_t)$ is the quantized LLR, and $\alpha_t$, $\beta_t$ are the type-I and type-II errors defined in \eqref{eq:errors_def}.\hfill\qedsymbol
	\end{definition}
	Note that the communication constraint imposed in Definition \ref{def:achievable_single}(a) is in terms of \emph{nats} for notational simplicity. The achievable region is then defined as the set of the achievable pairs $(R,\theta)$. The theorem below characterizes the boundary of this region in two parts.
	\begin{theorem}\label{thm:theta}
		Let $\theta^*(R) := \sup\{\theta: (R,\theta)\text{  achievable}\}$ and define 
		\begin{align}\label{eq:theta_single_node}
		\theta_t(R) := \sup_{\{f_1,\dots,f_t\} \in \cF_t(R)} \frac 1 t \sum_{\tau = 1}^t \bigg(E_P[S_\tau]-\log E_P[e^{S_\tau-L_\tau}]\bigg)
		\end{align}
		where $\cF_t(R)$ is the set of all simple real-valued functions $f_1,\dots,f_t$ on $(\mathbb{R},\mathcal{B}(\mathbb{R}))$ such that $\frac 1 t \sum_{\tau = 1}^t H_P(S_\tau) \leq R$.
		Then, the following statements hold.\\
		\noindent(i) Let \begin{equation} \label{eqn:concave_env}
		\theta_1(R) = \sup_{f_1 \in \cF_1(R)} E_P[S_1]-\log E_P[e^{S_1-L_1}].
		\end{equation}
		Then $\lim_{t \to \infty} \theta_t(R)$ equals to the upper concave envelope $\breve\theta_1(R)$ of $\theta_1(R)$.\\
		\noindent (ii)
		\vspace*{-1.55\baselineskip}\begin{flalign*}&\phantom{\mbox{(ii)}}\quad 
		\theta^*(R) = \lim_{t \to \infty}\theta_t(R) = \breve\theta_1(R).&&
		\end{flalign*}
	\end{theorem}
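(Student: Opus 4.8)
The plan is to prove the two displayed identities in three stages: first a single-letter Chernoff/large-deviations converse showing $\theta^*(R) \le \lim_t \theta_t(R)$; then an achievability argument showing $\theta^*(R) \ge \breve\theta_1(R)$ via time-sharing among one-shot quantizers; and finally the claim that $\lim_t \theta_t(R) = \breve\theta_1(R)$, which together close the loop.

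For the converse, fix any achievable pair $(R,\theta)$ with quantizers $\{f_\tau\}$ and thresholds $\{\eta_\tau\}$ satisfying (a)--(c) of Definition~\ref{def:achievable_single}. The scores $S_\tau = f_\tau(L_\tau)$ are independent across $\tau$, so $\bar S_t$ is a normalized sum of independent bounded random variables. The key computation is to bound $\beta_t = Q(\bar S_t \ge \eta_t)$ from below by relating it to the same sum under $P$: a change of measure gives, for each $\tau$, $E_Q[e^{\lambda S_\tau}] = E_P[e^{\lambda S_\tau} e^{-L_\tau}]$ when $\lambda = 1$, which is precisely why the exponent in \eqref{eq:theta_single_node} features $E_P[e^{S_\tau - L_\tau}]$. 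Concretely, I would write a Chernoff bound on the event $\{\bar S_t \ge \eta_t\}$ under $Q$ and, after choosing the tilting parameter to match $\eta_t = \frac1t\sum_\tau E_P[S_\tau] - \epsilon$ from \eqref{eqn:threshold}, obtain
\begin{equation*}
\frac1t\log\frac1{\beta_t} \le \frac1t\sum_{\tau=1}^t\Big(E_P[S_\tau] - \log E_P[e^{S_\tau - L_\tau}]\Big) + o(1) + O(\epsilon),
\end{equation*}
where the $o(1)$ absorbs the sub-exponential corrections in the Chernoff estimate and the $O(\epsilon)$ comes from the threshold offset; since $\{f_1,\dots,f_t\}$ automatically lies in $\mathcal F_t(R)$ by constraint (a), the right side is at most $\theta_t(R) + o(1) + O(\epsilon)$. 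Taking $\liminf_t$, then $\epsilon \downarrow 0$, yields $\theta \le \lim_t \theta_t(R)$, hence $\theta^*(R) \le \lim_t\theta_t(R) = \breve\theta_1(R)$ by part~(i). One must also confirm that (c) cannot be met unless $E_P[S_\tau] - \log E_P[e^{S_\tau - L_\tau}] \ge 0$ is genuinely being exploited, i.e.\ that the type-I condition (b) forces $\bar S_t$ to concentrate near its $P$-mean; this is a standard weak law / Chebyshev argument using that $E_P[S_\tau]$ is finite and the variances are controlled.

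For achievability, I would take any rate split $R = \gamma R_1 + (1-\gamma) R_2$ with $\gamma \in [0,1] \cap \mathbb Q$ and near-optimal one-shot quantizers $g_1 \in \mathcal F_1(R_1)$, $g_2 \in \mathcal F_1(R_2)$ achieving values close to $\theta_1(R_1)$, $\theta_1(R_2)$ in \eqref{eqn:concave_env}. Using $g_1$ on a fraction $\gamma$ of the time slots and $g_2$ on the rest (centering the scores and using threshold $-\epsilon$ as explained after \eqref{eqn:threshold}), constraint (a) holds by linearity of the entropy averaging, (b) holds by the weak law applied to the (still independent, bounded) centered scores, and the exponent in (c) is, by Cramér's theorem applied to the i.i.d.-within-each-block structure, at least $\gamma\,\theta_1(R_1) + (1-\gamma)\theta_1(R_2) - O(\epsilon)$. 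Since the upper concave envelope $\breve\theta_1(R)$ is exactly the supremum of such convex combinations over all valid splits, letting $\epsilon \downarrow 0$ gives $\theta^*(R) \ge \breve\theta_1(R)$. Combined with the converse, $\theta^*(R) = \breve\theta_1(R)$, and since by part~(i) the latter equals $\lim_t\theta_t(R)$, all three quantities coincide.

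The main obstacle I anticipate is the converse Chernoff step, specifically handling the $\liminf$ in (c) together with the per-time-slot freedom in the quantizers: the bound must be uniform enough that $\frac1t\sum_\tau(\cdots) \le \theta_t(R)$ can be invoked for the \emph{actual} (possibly time-varying and $t$-dependent) sequence $\{f_\tau\}$, and one must ensure the sub-exponential error terms in the Chernoff estimate are $o(t)$ uniformly over $\mathcal F_t(R)$ — this requires a bound on the range or variance of $S_\tau$ that is not immediate, since simple functions can take arbitrarily large values. The resolution is presumably to argue that scores with large second moment under $P$ either violate the rate constraint or contribute negligibly, so one may restrict to a regime where the Chernoff corrections are controlled; making that truncation rigorous is where the real work lies. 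Part~(i), that $\lim_t\theta_t(R)$ is the concave envelope of $\theta_1$, I would treat as a separate lemma — it is the standard fact that the optimal value of a sum of identical per-letter functionals under an averaged constraint, optimized over time-sharing, converges to the concave hull — and invoke it here.
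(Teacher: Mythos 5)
Your converse is the crux, and it is going in the wrong direction. A Chernoff bound on $\{\bar S_t \ge \eta_t\}$ under $Q$ gives an \emph{upper} bound on $\beta_t$, i.e.\ a \emph{lower} bound on $\frac1t\log\frac1{\beta_t}$ — that is the achievability direction, not the converse. To obtain the converse you would need a \emph{lower} bound on the tail probability, which is a Cram\'er/Bahadur--Rao-type estimate. That estimate requires the tilting parameter to land exactly at the threshold, needs careful handling of non-identical (and $t$-dependent) increments $S_\tau = f_\tau(L_\tau)$, and suffers from lattice-span corrections when the $S_\tau$ are discrete — none of which your write-up addresses, and the ``sub-exponential uniformity over $\cF_t(R)$'' worry you flag at the end is a symptom of having picked a tool that fights you at every step. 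The paper avoids all of this: its converse applies the data-processing inequality to the Markov chain (scores $\to$ decision), giving $\sum_\tau D(P_{f_\tau}\|Q_{f_\tau}) \ge d(\alpha_t \| 1-\beta_t)$, and then expands the binary divergence $d$ under $\alpha_t\to0$ to extract the exponent. This is the standard Stein's-lemma converse, it works for \emph{any} feasible (time-varying, $t$-dependent) quantizer sequence with no concentration estimates, and it is where the quantity $D(P_f\|Q_f) = E_P[S]-\log E_P[e^{S-L}]$ enters as an information quantity rather than as a Chernoff exponent. You should replace the Chernoff converse with this argument.

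Your achievability (time-sharing between two one-shot quantizers, then Cram\'er) is workable but heavier than necessary. The paper applies Markov's inequality at tilting parameter $1$ together with the measure-change identity $E_Q[e^{S_\tau}] = E_P[e^{S_\tau - L_\tau}]$, which in one line gives $\frac1t\log\frac1{\beta_t} \ge \frac1t\sum_\tau\bigl(E_P[S_\tau]-\log E_P[e^{S_\tau-L_\tau}]\bigr)-\epsilon$ for \emph{arbitrary} independent-in-$\tau$ quantizers, not just time-shared pairs — no Cram\'er invocation and no block structure needed. The two-point time-sharing structure \emph{is} used, but only inside part (i), where Carath\'eodory's theorem realizes each point of the concave envelope $\breve\theta_1$ as a convex combination of at most two points of $\theta_1$; deferring that to a lemma is fine, but note that it is not the usual rate-distortion time-sharing argument (which relies on convexity of the single-letter function), since $\theta_1$ itself need not be concave, which is the whole reason the envelope appears.
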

	\begin{proof}
		See Appendix \ref{app:theta}.
	\end{proof}
	Theorem \ref{thm:theta} provides the boundary of the achievable region in a variational form that is reminiscent of a single-letter characterization. However, the optimization problem \eqref{eqn:concave_env} has a non-convex domain, which makes $\cF_1(R)$ a non-convex set. We will therefore consider a relaxed version of the optimization problem \eqref{eqn:concave_env} in the next section.
	
	\subsection{An Upper Bound on the Boundary of the Achievable Region}\label{sec:upper_bnd}
	In order to relax the problem \eqref{eqn:concave_env}, we (i) allow randomized quantization, and (ii) modify the communication constraint to $I_P(S_1;L_1) \leq R$, where $I_P(S_1;L_1)$ is the mutual information between $S_1$ and $L_1$ under $\cH_0$. Note that since $H_P(S_1) \geq I_P(S_1;L_1)$, $H_P(S_1) \leq R$ implies $I_P(S_1;L_1)\leq R$, hence the communication constraints indeed become less stringent. Moreover, the randomized quantization procedures can be represented as channels $p_{V|U} : \cU \times\mathbb{R} \to \mathbb{R}_+$ where for each $u$, $p_{V|U}(v,u)$ is a probability mass function on the finite set $\cU \subset \mathbb{R}$. We further relax the problem by taking $\cU = \mathbb{R}$, hence the possible channels become $p_{V|U} : \cB(\mathbb{R}) \times\mathbb{R} \to \mathbb{R}_+$, where for each $u$, $p_{V|U}(v,u)$ is a probability measure on $\mathbb{R}$. Adopting the modifications we have just described, problem \eqref{eqn:concave_env} then becomes
	\begin{equation}\label{eqn:curve}
	\begin{split}
	\theta_U(R) :=  \sup_{p_{V|U}} &\quad E_P[V]- \log E_P[\exp(V-U)]\\
	\text{s.t.}& \quad I_P(U; V)\leq R
	\end{split}
	\end{equation}
	where $U$ has the same distribution as the LLR $L_1$. Observe that as $R$ increases, the optimization domain is enlarged and thus $\theta_U(R)$ cannot decrease; which shows that $\theta_U(R)$ is non-decreasing. Moreover, $\theta_U(R)$ also captures the behavior at the extremes. Intuitively, if $R \to \infty$, then $V$ can be set equal to $U$ and $\theta_U$ becomes
	\begin{equation}\label{eq:max_val}
	E_P[U] = E_P\bigg[\log\frac{dP}{dQ}\bigg] = D(P||Q)
	\end{equation}
	which is known from Stein's lemma \cite{Chernoff} as the optimal type-II error exponent under vanishing type-I error probability. This intuitive argument will be made rigorous in Lemma \ref{lem:delta_limit}. On the other extreme, if $R = 0$, then the best possible choice is to set $V$ equal to a constant $v$ and $\theta_U$ becomes 
	\begin{equation}
	v - \log E_P\bigg[\bigg(\frac{dP}{dQ}\bigg)^{-1}\bigg] - v = 0,
	\end{equation}
	which is consistent with the fact that the center is not able to infer the true hypothesis when there is no communication.
	
	Another useful characterization of $\theta_U$ is given by the following lemma.
	
	\begin{lemma}\label{lem:dist_rate}
		Let
		\begin{equation}\label{eqn:bad_curve}
		\begin{split}
		\tilde\theta_U(R) :=  \sup_{p_{V|U}} &\quad E_P[V]- E_P[\exp(V-U)]+1\\
		\text{s.t.}& \quad I_P(U; V)\leq R.
		\end{split}
		\end{equation}
		Then, $\theta_U(R) = \tilde \theta_U(R)$.
	\end{lemma}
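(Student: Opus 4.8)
The plan is to establish the two inequalities $\theta_U(R) \ge \tilde\theta_U(R)$ and $\theta_U(R) \le \tilde\theta_U(R)$ separately. Both supremizations in \eqref{eqn:curve} and \eqref{eqn:bad_curve} range over the \emph{same} feasible set, namely the channels $p_{V|U}$ with $U$ distributed as $L_1$ and $I_P(U;V)\le R$, so it suffices to compare the two objective functionals channel by channel together with one change-of-variable argument.

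For the direction $\theta_U(R) \ge \tilde\theta_U(R)$, I would simply invoke the elementary bound $\log x \le x - 1$ valid for all $x>0$. Applying it with $x = E_P[\exp(V-U)]$, which is strictly positive (since $\exp(V-U)>0$ $P$-almost surely) and which we may take finite (channels with $E_P[\exp(V-U)]=\infty$ make the $\theta_U$-objective $-\infty$ and are irrelevant to the supremum), gives
\[
E_P[V] - \log E_P[\exp(V-U)] \;\ge\; E_P[V] - E_P[\exp(V-U)] + 1 .
\]
Thus the $\theta_U$-objective pointwise dominates the $\tilde\theta_U$-objective on the common feasible set, and taking suprema yields $\theta_U(R)\ge\tilde\theta_U(R)$.

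For the reverse direction, the key observation is that a deterministic shift $V \mapsto V + c$, $c\in\mathbb{R}$, is a bijection of the $V$-coordinate, hence leaves $I_P(U;V)$ unchanged; so it maps feasible channels to feasible channels. Given any feasible $p_{V|U}$ with $a := E_P[\exp(V-U)]\in(0,\infty)$ (and $E_P[V]$ finite), the $\tilde\theta_U$-objective evaluated at the shifted channel is
\[
E_P[V+c] - E_P[\exp(V+c-U)] + 1 \;=\; E_P[V] + c - e^{c}a + 1 ,
\]
and a one-line calculus check shows this is maximized over $c$ at $c = -\log a$, with maximal value $E_P[V] - \log a = E_P[V] - \log E_P[\exp(V-U)]$. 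Since each shifted channel is feasible, $\tilde\theta_U(R)$ is at least the $\theta_U$-objective of every feasible channel, and taking the supremum over feasible channels gives $\tilde\theta_U(R)\ge\theta_U(R)$. Combining the two inequalities proves $\theta_U(R)=\tilde\theta_U(R)$. I do not expect a genuine obstacle here; the only care required is to dismiss degenerate channels (where $E_P[\exp(V-U)]$ or $E_P[V]$ fails to be finite), which influence neither supremum, and to verify the shift-invariance of mutual information. The conceptual point is that the $\log$ appearing in $\theta_U$ is precisely the result of pre-optimizing the free additive constant of the channel, which $\tilde\theta_U$ instead carries explicitly — so the two optimization problems coincide.
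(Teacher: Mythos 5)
Your proof is correct, and it is more direct and self-contained than the one in the paper. The paper proves the lemma by revisiting the achievability bound \eqref{eqn:lower_bnd_donsker}, weakening it with $\log x\le x-1$, invoking the alternative Gibbs/Legendre variational representation $D(P\|Q)=\sup_g E_P[g]-E_Q[e^g]+1$, and then asserting that the entire argument of Theorem~\ref{thm:theta} can be re-run with this representation in place of Donsker--Varadhan; the conclusion $\theta_U=\tilde\theta_U$ is left implicit in that re-run. Your argument instead compares the two constrained supremization problems head-on. The direction $\theta_U\ge\tilde\theta_U$ is the same pointwise use of $\log x\le x-1$ that the paper relies on. The reverse direction is where you add genuine value: you observe that the feasible set $\{p_{V|U}:I_P(U;V)\le R\}$ is closed under deterministic shifts $V\mapsto V+c$ (since additive constants leave $I_P(U;V)$ unchanged), and that the optimal shift $c=-\log E_P[e^{V-U}]$ converts the $\tilde\theta_U$-objective of the shifted channel into exactly the $\theta_U$-objective of the original. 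This is precisely the mechanism underlying the equivalence of the two variational formulas for $D(P\|Q)$; making it explicit is what justifies that the equivalence survives the rate constraint, which is the actual content of the lemma and is only implicit in the paper's ``proceeding similarly'' step. Your handling of degenerate channels is also appropriate: since by Jensen both objectives are bounded above by $D(P\|Q)$, channels with infinite $E_P[e^{V-U}]$ or $E_P[V]$ contribute nothing to either supremum and can be discarded.
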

	\begin{proof}See Appendix \ref{app:dist_rate}.
	\end{proof}
	Observe that $-\tilde \theta_U$ is given by
	\begin{equation}
	\begin{split}
	-\tilde\theta_U(R) =  \inf_{p_{V|U}} &\quad -E_P[V]+ E_P[\exp(V-U)]-1\\
	\text{s.t.}& \quad I_P(U; V)\leq R.
	\end{split}
	\end{equation}
	We highlight the equivalence between $-\tilde\theta_U(R)$ and the distortion-rate function with the distortion function $d(u,v) = -v+e^{v-u}-1$. Since it is known that this curve is convex, $-\tilde\theta_U(R)$ is also convex and consequently, $\tilde\theta_U(R)$ is concave. We then make use of the characterization in Lemma \ref{lem:dist_rate} and conclude that $\theta_U(R)$ is concave as well.
	
	We end this section with the following corollary, which states that $\theta_U$ is a concave upper bound to the boundary of the achievable region given by $\theta^*(R)$.
	
	\begin{corollary}\label{cor:upper_bnd}
		$\theta_U(R) \geq \breve\theta_1(R) = \theta^*(R)$.
	\end{corollary}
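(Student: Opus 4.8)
The plan is to exploit the fact that $\theta_U(R)$ is, by its very construction in \eqref{eqn:curve}, a relaxation of the single-letter quantity whose upper concave envelope appears in Theorem \ref{thm:theta}, together with the concavity of $\theta_U(R)$ established via Lemma \ref{lem:dist_rate}. Since Theorem \ref{thm:theta}(ii) already supplies the identity $\breve\theta_1(R) = \theta^*(R)$, the only thing left to prove is the inequality $\theta_U(R) \geq \breve\theta_1(R)$.

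First I would establish the pointwise bound $\theta_U(R) \geq \theta_1(R)$ for every $R \geq 0$. Fix $R$ and take any simple function $f_1 \in \cF_1(R)$, so that $H_P(S_1) \leq R$ with $S_1 = f_1(L_1)$. Let $U$ have the law of $L_1$ under $P$ and define the deterministic channel $p_{V|U}(\cdot\,,u) = \delta_{f_1(u)}$, a legitimate element of the optimization domain of \eqref{eqn:curve}; the resulting $V$ then has the law of $S_1$. Because $V$ is a measurable function of $U$, we have $I_P(U;V) = H_P(V) = H_P(S_1) \leq R$, so the channel is feasible for \eqref{eqn:curve}. Its objective value $E_P[V] - \log E_P[\exp(V-U)]$ equals $E_P[S_1] - \log E_P[e^{S_1 - L_1}]$, which is exactly the quantity maximized in \eqref{eqn:concave_env}. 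Taking the supremum over $f_1 \in \cF_1(R)$ gives $\theta_1(R) \leq \theta_U(R)$.

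Next I would invoke concavity. The discussion following Lemma \ref{lem:dist_rate} (via the identification of $-\tilde\theta_U = -\theta_U$ with a distortion-rate function, which is convex) shows that $\theta_U(R)$ is concave in $R$. By definition, $\breve\theta_1$ is the smallest concave function dominating $\theta_1$; since $\theta_U$ is concave and satisfies $\theta_U(R) \geq \theta_1(R)$ for all $R$, it follows that $\theta_U(R) \geq \breve\theta_1(R)$ for all $R$. Combining with Theorem \ref{thm:theta}(ii) yields $\theta_U(R) \geq \breve\theta_1(R) = \theta^*(R)$, which is the claim.

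This is a short argument and I do not anticipate a genuine obstacle: the substantive work — concavity of $\theta_U$ (through Lemma \ref{lem:dist_rate} and the distortion-rate identification) and the identity $\breve\theta_1 = \theta^*$ (Theorem \ref{thm:theta}) — has already been carried out. The closest thing to a sticking point is purely bookkeeping, namely checking that a deterministic simple function is genuinely admissible in the relaxed program \eqref{eqn:curve} (it defines a valid channel $p_{V|U}\colon\cB(\mathbb{R})\times\mathbb{R}\to\mathbb{R}_+$, and $I_P(U;V) = H_P(V)$ whenever $V$ is a function of $U$), and that the objectives of \eqref{eqn:concave_env} and \eqref{eqn:curve} agree on such functions, so that passing to the relaxation can only increase the value.
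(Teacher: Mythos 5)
Your argument is exactly the paper's: observe that the relaxed program \eqref{eqn:curve} dominates \eqref{eqn:concave_env} pointwise (so $\theta_U \geq \theta_1$), then use concavity of $\theta_U$ to pass to the concave envelope, and invoke Theorem~\ref{thm:theta}(ii) for $\breve\theta_1 = \theta^*$. You have merely spelled out the feasibility check (a deterministic simple function gives an admissible channel with $I_P(U;V) = H_P(S_1) \leq R$ and the same objective value) that the paper leaves implicit in the phrase ``relaxation.''
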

	\begin{proof} As \eqref{eqn:curve} is a relaxation of \eqref{eqn:concave_env}, we know that $\theta_U(R) \geq \theta_1(R)$. In addition, $\theta_U(R)$ is concave; then it must also dominate the concave envelope $\breve\theta_1(R)$ of $\theta_1(R)$.
	\end{proof}
\vspace{\baselineskip}
	\begin{remark}
		As stated before, we have shown the equivalence of $\theta_U$ and $\tilde\theta_U$ in Lemma \ref{lem:dist_rate}. Although it might be tempting to work with $\tilde\theta_U$, as it is the optimal value of an optimization problem with a linear objective, \eqref{eqn:bad_curve} takes a smaller value than \eqref{eqn:curve} if a generic $p_{V|U}$ is substituted; thus leading to tighter bounds. This is due to the inequality $\log x\leq x-1$. In view of this observation, we work with the formulation \eqref{eqn:curve} in the sequel.\hfill\qedsymbol
	\end{remark}
	
	\subsection{Calculating the Upper Bound $\theta_U$}
	In this section, and for the rest of the work, we assume all the expectations (including the mutual information $I_P(\cdot\,;\cdot)$) are taken under $P$, and we omit it from the subscripts for brevity. Applying Jensen's inequality to the objective function in \eqref{eqn:curve}, we have
	\begin{equation}\label{eq:jensen}
	\begin{split}
	&E[V]- \log E[\exp(V-U)]\\
	&\leq E[V] - E[V-U] = D(P||Q).
	\end{split}
	\end{equation}
	Thus, it is also convenient to study the gap to $D(P||Q)$. The gap function $\delta_U(R) := D(P||Q)-\theta_U(R)$ is then straightforwardly given by
	\begin{equation}\label{eq:gap}
	\begin{split}
	\delta_U(R) = \inf_{p_{V|U}} & \quad  \log E[\exp(V-U)] - E[V-U]\\
	\text{s.t.}& \quad I(U; V)\leq R.
	\end{split}
	\end{equation}
	Note that since $\theta_U$ is concave and non-decreasing, $\delta_U(R)$ is convex and non-increasing by definition, and the following lemma explains the limiting behavior as $R\to \infty$.
	\begin{lemma}\label{lem:delta_limit} $\lim_{R\to\infty}\delta_U(R) = 0$. Consequently, $\lim_{R \to \infty}$ $ \theta_U(R) = D(P||Q)$.
	\end{lemma}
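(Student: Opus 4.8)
The plan is to show that for any target gap $\epsilon>0$ there exists a finite rate $R$ and a channel $p_{V|U}$ feasible for \eqref{eq:gap} at that rate with objective value below $\epsilon$; since $\delta_U$ is non-increasing and non-negative (the latter by Jensen, as in \eqref{eq:jensen}), this forces $\lim_{R\to\infty}\delta_U(R)=0$. The natural test channel is a quantization of $U$: partition $\mathbb{R}$ into cells and let $V$ be (a suitable representative on) the cell containing $U$, so that $|V-U|$ is uniformly small, say bounded by $\Delta$, on a set of large probability, while $I(U;V)\le H(V)$ is finite because $V$ takes finitely many values. The objective $\log E[e^{V-U}]-E[V-U]$ is exactly the gap in Jensen's inequality for the convex function $e^{x}$ applied to the random variable $W:=V-U$; it is small whenever $W$ is concentrated near a constant. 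So the strategy is: first deal with a compact range of $U$, where a fine uniform quantizer makes $W$ as small as we like in sup-norm; then handle the tails of $U$, which is where the real work lies.

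First I would fix a large truncation level $M$ and a fine spacing $\Delta$, and on the event $\{|U|\le M\}$ use the deterministic quantizer that rounds $U$ to the nearest multiple of $\Delta$ (clipped to $[-M,M]$), so that there $|W|\le \Delta/2$. On the tail event $\{|U|>M\}$ I would simply set $V$ to a single fixed value (say $0$), so $V$ ranges over at most $2M/\Delta+2$ values and $H(V)<\infty$, giving a valid finite rate $R=R(M,\Delta)$. It then remains to bound $\log E[e^{W}]-E[W]$. Write $E[e^{W}]=E[e^{W}\indic\{|U|\le M\}]+E[e^{-U}\indic\{|U|>M\}]$. The first term is at most $e^{\Delta/2}$; the second is $E\big[\tfrac{dQ}{dP}(X)\indic\{|U|>M\}\big]\le Q(|U|>M)\to 0$ as $M\to\infty$, using $U=\log\frac{dP}{dQ}(X)$ and that $P\ll Q$. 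For $E[W]$: on the compact part $|W|\le\Delta/2$ contributes $O(\Delta)$, and on the tail $E[W\indic\{|U|>M\}]=E[-U\,\indic\{|U|>M\}]$, which I must show tends to $0$; this is the crux.

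The main obstacle is precisely controlling $E[(-U)\indic\{|U|>M\}]$, i.e.\ the tail contribution of the LLR to its own mean. Note $E[U]=D(P\|Q)<\infty$ is assumed implicitly (it is the quantity $\theta^*(\infty)$ the paper identifies with Stein's lemma), so $E[U^{+}]<\infty$ and hence $E[U^{+}\indic\{|U|>M\}]\to 0$ by dominated convergence. The genuinely delicate piece is the negative part $E[U^{-}\indic\{U<-M\}]$: a priori $E[U^{-}]$ need not be finite. Here I would use that $U^{-}=(\log\frac{dQ}{dP})^{+}$ and the elementary inequality $x^{+}\le e^{x}$ for $x=\log\frac{dQ}{dP}$, giving $E[U^{-}]\le E[\tfrac{dQ}{dP}]\le 1<\infty$; therefore $E[U^{-}\indic\{U<-M\}]\to 0$ as well, again by dominated convergence. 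Combining, for any $\eta>0$ pick $M$ so that the tail terms are below $\eta$, then pick $\Delta$ so the compact terms are below $\eta$; then $\log E[e^{W}]-E[W]\le \log(e^{\Delta/2}+\eta)+O(\Delta)+\eta$, which is $\le C\eta$ for small $\Delta$. Hence $\delta_U(R(M,\Delta))\le C\eta$, and letting $\eta\to 0$ (so $R\to\infty$) together with monotonicity of $\delta_U$ yields $\lim_{R\to\infty}\delta_U(R)=0$. The consequence $\lim_{R\to\infty}\theta_U(R)=D(P\|Q)$ is then immediate from the definition $\delta_U(R)=D(P\|Q)-\theta_U(R)$.
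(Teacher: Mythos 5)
Your proof is correct and follows essentially the same strategy as the paper's: a uniform quantizer on a compact interval with the tail mapped to a single value, then splitting both $E[e^{V-U}]$ and $E[V-U]$ into compact and tail contributions, with the compact part controlled by the grid spacing and the tail parts by $E[e^{-U}]<\infty$ and $E[|U|]<\infty$. Your only addition is the self-contained justification that $E[|U|]<\infty$ via $E[U^-]\le E_P[dQ/dP]\le 1$ and $E[U^+]=D(P\|Q)+E[U^-]<\infty$, whereas the paper simply invokes its earlier standing assumption that $U$ is square integrable.
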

	\begin{proof}
		See Appendix \ref{app:delta_limit}.
	\end{proof}
	We highlight that Lemma \ref{lem:delta_limit} holds even if $P$ and $Q$ do not admit densities. Now, we intend to derive an upper bound for $\delta_U$. Let $Z := V-U$. Then \eqref{eq:gap} is equivalent to
	\begin{equation}\label{eq:delta_Z}
	\begin{split}
	\delta_U(R) = \inf_{p_{Z|U}} & \quad  \log E[\exp(Z)] - E[Z]\\
	\text{s.t.}& \quad I(U; U+Z)\leq R.
	\end{split}
	\end{equation}
	A simple upper bound to $\delta_U(R)$ can be obtained by choosing $Z$ as a Gaussian random variable independent of $U$. With such choice, we have
	\begin{equation}\label{eq:upper_derivation}
	\begin{split}
	I(U; U+Z) &= h(U+Z)-h(U+Z|U)\\
	&= h(U+Z)-h(Z|U)\\
	&= h(U+Z)-h(Z)\\
	&\leq \frac 1 2 \log\bigg(1+\frac{\Var(U)}{\Var(Z)}\bigg)
	\end{split}
	\end{equation}
	where $h(\cdot)$ denotes the differential entropy. Observe that $U+Z$ always admits a probability density; $U$ need not be continuous. However we assume $U$ is square integrable such that $\Var(U)$ exists. Furthermore, for a Gaussian $Z$
	\begin{equation}
	\begin{split}
	\log E[\exp(Z)] - E[Z] &= E[Z] + \log(e^{\frac 1 2 \Var(Z)}) - E[Z]\\
	&=\frac 1 2 \Var(Z).
	\end{split}
	\end{equation}
	Denoting the variance of $Z$ by $\sigma^2$, observe that the parametric curve
	\begin{equation}\label{eq:parametric_upperbnd}
	R = \frac 1 2 \log\bigg(1+\frac{\Var(U)}{\sigma^2}\bigg),\quad \delta = \frac 1 2 \sigma^2
	\end{equation}
	lies above $\delta_U(R)$, and equivalently
	\begin{equation}\label{eq:bad_bound}
	\delta_U(R) \leq \frac{\Var(U)}{e^{2R}-1} =: g_U(R).
	\end{equation}
	The bound \eqref{eq:bad_bound} is however not tight at low rates. Observe that as $R \to 0$, the right-hand side of \eqref{eq:bad_bound} tends to infinity although we know that the gap $\delta$ can at most be $D(P||Q)$ --- see \eqref{eq:jensen}. The bound can be strengthened as follows: Since we know $\delta_U$ is convex with $\delta_U(0) = D(P||Q)$, and $\delta_U(R) \leq g_U(R)$, any line segment connecting $(0,D(P||Q))$ with the curve $g_U(R)$ lies above $\delta_U(R)$. Among such line segments, the one which is tangent to $g_U(R)$ gives the tightest bound.
	
	After obtaining this simple upper bound, we direct our attention to the calculation of $\delta_U(R)$. Note that the objective function in \eqref{eq:delta_Z} is concave. This is because $\log E[e^{Z}]$ is concave, and $E[Z]$ is linear in $p_{Z|U}$. Hence, it is a concave minimization problem, and might a priori require examining all extreme points of the feasible set. However, we now show that the problem can be formulated as a convex minimization, circumventing the combinatorial challenge. First, note that both the objective function and the constraint in \eqref{eq:delta_Z} remain unchanged if we add a constant to $Z$. Thus, centering $Z$ does not change the feasible region in \eqref{eq:delta_Z}. Consequently, we can add the constraint $E[Z] = 0$ to our problem without changing its value, which yields an equivalent formulation of \eqref{eq:delta_Z}:
	\begin{equation}
	\begin{split}
	\delta_U(R) = \inf_{p_{Z|U}} & \quad  \log E[\exp(Z)]\\
	\text{s.t.}& \quad I(U; U+Z)\leq R\\
	& \quad E[Z] = 0
	\end{split}
	\end{equation}
	Any infimizer of the above problem also infimizes the optimization problem with the objective function $E[\exp(Z)]$, and the optimal value of the former problem is the logarithm of the optimal value of the latter. Further note that the objective function becomes linear when changed to $E[\exp(Z)]$. The latter problem is formulated as the convex program
	\begin{equation}\label{eq:dual_formulation}
	\begin{split}
	\Delta_U(R) :=  \inf_{p_{Z|U}} & \quad  E[\exp(Z)]\\
	\text{s.t.}& \quad I(U; U+Z)\leq R\\
	& \quad E[Z] = 0
	\end{split}
	\end{equation}
	with $\log\Delta_U(R) = \delta_U(R)$. Observe that $\Delta_U(R)$ is convex, non-decreasing, and is finite at every $R \geq 0$ --- check the feasible choice $Z = -U  + E[U]$ and observe $\Delta_U(R) \leq E[e^{-U}]e^{E[U]} = e^{D(P||Q)}$. Therefore, $\Delta_U(R)$ can be expressed as
	\begin{equation}
	\Delta_U(R) = \sup_{\lambda> 0} \cL(\lambda) -\lambda R
	\end{equation}
	where
	\begin{equation}
	\begin{split}
	\cL(\lambda) :=  \inf_{p_{Z|U}} & \quad  E[\exp(Z)] + \lambda I(U; U+Z) \\
	\text{s.t.}
	& \quad E[Z] = 0.
	\end{split}
	\end{equation}
	 Each $\lambda > 0$ describes a straight line $\Delta + \lambda R = \cL(\lambda)$ in the $(R,\Delta)$ plane. $\Delta_U(R)$ is the supremum in the $\Delta$ axis of these lines. The generalized inverse of $\Delta_U(R)$, $R_U(\Delta)$, is then the supremum of these lines in the $R$ axis,
	\begin{equation}
	\begin{split}
	R_U(\Delta) &= \sup_{\lambda > 0} \frac 1 {\lambda} \cL(\lambda) - \frac 1 \lambda \Delta\\
	&= \sup_{\eta > 0} \eta \cL\bigg(\frac 1 \eta\bigg) - \eta \Delta
	\end{split}
	\end{equation}
	which is identical to the following convex problem for $\Delta > 1$.
	\begin{equation}\label{eq:ru_delta}
	\begin{split}
	R_U(\Delta) = \inf_{p_{Z|U}} & \quad  I(U; U+Z)\\
	\text{s.t.}& \quad E[\exp(Z)]\leq \Delta\\
	& \quad E[Z] = 0.
	\end{split}
	\end{equation}
	
	An important direction is to obtain a closed-form lower bound for $R_U$, which consequently gives a lower bound for $\delta_U$. Such a lower bound characterizes an unachievable region as $\delta_U$ is a lower bound to the boundary curve of the achievable region. 
	\begin{assumption}
		For the rest of the work, we assume that $U$ admits a probability density $p_U$. Hence the differential entropy $h(U)$ is well-defined (but not necessarily finite). \hfill\qedsymbol
	\end{assumption}
	Note that \eqref{eq:ru_delta} is exactly the same as the rate-distortion formulation except for the additional constraint $E[Z] = 0$. This special structure allows us to derive a lower bound based on maximum-entropy principles, which also led Shannon to derive the well-known lower bound for the rate-distortion problem under mean-square distortion \cite{shannon}. We shall use the same machinery as well. Note that 
	\begin{equation}
	\begin{split}
	I(U;U+Z) &= h(U) - h(U|U+Z)\\
	&= h(U) - h(Z|U+Z)\\
	&\geq h(U) - h(Z)
	\end{split}
	\end{equation}
	where the last inequality is due to the property ``conditioning reduces entropy''. Hence, we obtain
	
	\begin{equation}\label{eq:lower_bnd}
	\begin{split}
	R_U(\Delta) \geq \inf_{p_{Z|U}} & \quad  h(U) -h(Z)\\
	\text{s.t.}& \quad E[\exp(Z)]\leq \Delta\\
	& \quad E[Z] = 0.
	\end{split}
	\end{equation}
	
	Since the new objective function depends only on the marginal of $Z$, the problem above is equivalent to finding a maximum-entropy distribution $p_Z$ that satisfies the constraints $E[e^{Z}]\leq \Delta$ and $E[Z] = 0$. The problem can now be formulated as
	\begin{equation}\label{eq:max_ent}
	\sup_{p_Z} \ h(Z) \quad \text{s.t.} \quad E[e^{Z}] \leq \Delta,\quad  E[Z]= 0.
	\end{equation}
	
	The entropy maximizing distribution can be found with the methods in \cite[Chapter 12]{Cover2006} and is given by
	\begin{equation}
	f(z) = \frac{\beta^{\alpha}}{\Gamma(\alpha)}\exp(\alpha z - \beta e^{z}),\quad  \alpha,\beta > 0.
	\end{equation}
	Observe that $f(z)$ is the distribution of the logarithm of a Gamma random variable, i.e., $Z = \log G$ where $G \sim \text{Gamma}(\alpha,\beta)$. The following entities have closed form expressions:
	\begin{equation}\label{eq:log_gamma}
	\begin{split}
	E[e^{Z}] &= \frac{\alpha}{\beta}\\
	E[Z] &=  \psi(\alpha)-\log\beta\\
	h(Z) &=  \log \Gamma(\alpha) - \alpha \psi(\alpha) + \alpha
	\end{split}
	\end{equation}
	where $\Gamma(.)$ and $\psi(.)$ are gamma and digamma functions respectively. Note that $\log E [e^{Z}] - E[Z] = \log \alpha-\psi(\alpha)$ and does not depend on $\beta$. Substituting \eqref{eq:log_gamma} into \eqref{eq:lower_bnd}, we have just proved
	\begin{lemma}\label{lem:parametric}Define the parametric curve
		\begin{equation}\label{eqn:lower_gap}
		\begin{split}
		\uR_{U}(\alpha) &= h(U) - \log \Gamma(\alpha) + \alpha \psi(\alpha) - \alpha,\\
		\delta(\alpha) &= \log \alpha-\psi(\alpha),\quad \alpha > 0,
		\end{split}
		\end{equation}
		where $\Gamma(.)$, $\psi(.)$ are gamma and digamma functions respectively. Then \underbar{$R$}$_U(\delta) \leq R_U(\delta)$.\hfill\qedsymbol
	\end{lemma}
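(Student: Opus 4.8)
The plan is short because most of the work has already been done in deriving \eqref{eq:lower_bnd}: that inequality bounds $R_U(\Delta)$ below by $\inf\{h(U)-h(Z):E[e^{Z}]\le\Delta,\ E[Z]=0\}$, and since $h(U)$ is a constant and the objective $h(Z)$ depends only on the marginal of $Z$, this infimum equals $h(U)$ minus the optimal value of the scalar maximum-entropy program \eqref{eq:max_ent}. So it suffices to evaluate \eqref{eq:max_ent} for every $\Delta>1$ and then read off a parametric representation of the resulting lower bound.

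First I would solve \eqref{eq:max_ent}. Because the feasible set of \eqref{eq:max_ent} is nondecreasing in $\Delta$, its optimal value is nondecreasing in $\Delta$, so the maximizing $p_Z$ can be taken to satisfy $E[e^{Z}]=\Delta$ with equality; for the equality-constrained problem the Gibbs/exponential-family argument of \cite[Ch.\,12]{Cover2006} identifies the unique maximizer as the density $f(z)=\tfrac{\beta^{\alpha}}{\Gamma(\alpha)}\exp(\alpha z-\beta e^{z})$, $\alpha,\beta>0$, i.e.\ the law of $\log G$ with $G\sim\mathrm{Gamma}(\alpha,\beta)$. I would then record the closed forms \eqref{eq:log_gamma}; imposing $E[Z]=0$ forces $\beta=e^{\psi(\alpha)}$, and then $\Delta=E[e^{Z}]=\alpha e^{-\psi(\alpha)}$, so that the realized ``distortion level'' is $\delta:=\log\Delta=\log\alpha-\psi(\alpha)=:\delta(\alpha)$ while the maximal entropy is $h(Z)=\log\Gamma(\alpha)-\alpha\psi(\alpha)+\alpha$. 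Substituting the latter into \eqref{eq:lower_bnd} gives, at this distortion level, $R_U\big(e^{\delta(\alpha)}\big)\ge h(U)-\log\Gamma(\alpha)+\alpha\psi(\alpha)-\alpha=\uR_U(\alpha)$.

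Finally I would check that $(\delta(\alpha),\uR_U(\alpha))$ really traces the whole bound: $\alpha\mapsto\log\alpha-\psi(\alpha)$ is continuous and strictly decreasing on $(0,\infty)$ (its derivative is $1/\alpha-\psi'(\alpha)<0$ since $\psi'(\alpha)>1/\alpha$), running from $+\infty$ as $\alpha\to0^{+}$ to $0$ as $\alpha\to\infty$; hence it is a bijection of $(0,\infty)$ onto $(0,\infty)$, matching the domain $\Delta>1$ of \eqref{eq:ru_delta}. Viewing $R_U$ as a function of $\delta=\log\Delta$, we conclude $\uR_U(\delta)\le R_U(\delta)$ for every $\delta>0$, which is the assertion. (If $h(U)=-\infty$ the bound is vacuous; under the standing square-integrability of $U$ one has $h(U)\le\tfrac12\log(2\pi e\,\Var(U))<\infty$, and $h(Z)$ is finite for every $\alpha>0$, so no other edge case arises.) The only genuinely nontrivial ingredient is the maximum-entropy step — verifying that the exponential-family density is an actual maximizer over all admissible laws rather than a mere critical point, that the supremum is finite and attained, and that feasibility holds exactly for $\Delta>1$; the moment identities \eqref{eq:log_gamma} and the parametric substitution are then routine.
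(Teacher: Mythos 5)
Your proposal is correct and follows essentially the same argument the paper itself uses to justify the lemma (which it proves in-line, via \eqref{eq:lower_bnd}, the maximum-entropy reformulation \eqref{eq:max_ent}, and the log-Gamma exponential family). Your additional checks — that the $E[Z]=0$ constraint pins $\beta=e^{\psi(\alpha)}$, that $\delta(\alpha)=\log\alpha-\psi(\alpha)$ is a strictly decreasing bijection of $(0,\infty)$ onto $(0,\infty)$, and the remark on $h(U)=-\infty$ — are careful but do not change the method.
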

	
	In comparison, the parametric curve in \eqref{eq:parametric_upperbnd} gives the upper bound 
	\begin{equation}\label{eq:upper_bnd}
	\oR_U(\delta) = \frac 1 2 \log\bigg(1 + \frac{\Var(U)}{2\delta}\bigg).
	\end{equation}
	Both $\oR_U$ and $\uR_U$ are depicted in Figure \ref{fig:comparison} for a Gaussian $U$. As discussed before, the upper bound $\oR_U(\delta)$ is not tight at low rates since we know that $R_U(\delta) = 0$ at $\delta = D(P||Q)$, and the convexity of $R_U(\delta)$ enables tightening the upper bound by drawing the tangent line from the point $(D(P||Q),0)$ to $\oR_U$. This straight line bound is denoted as SL in Figure \ref{fig:comparison}.
	\begin{figure}[h!]
		\centering
		\includegraphics[scale = 0.65]{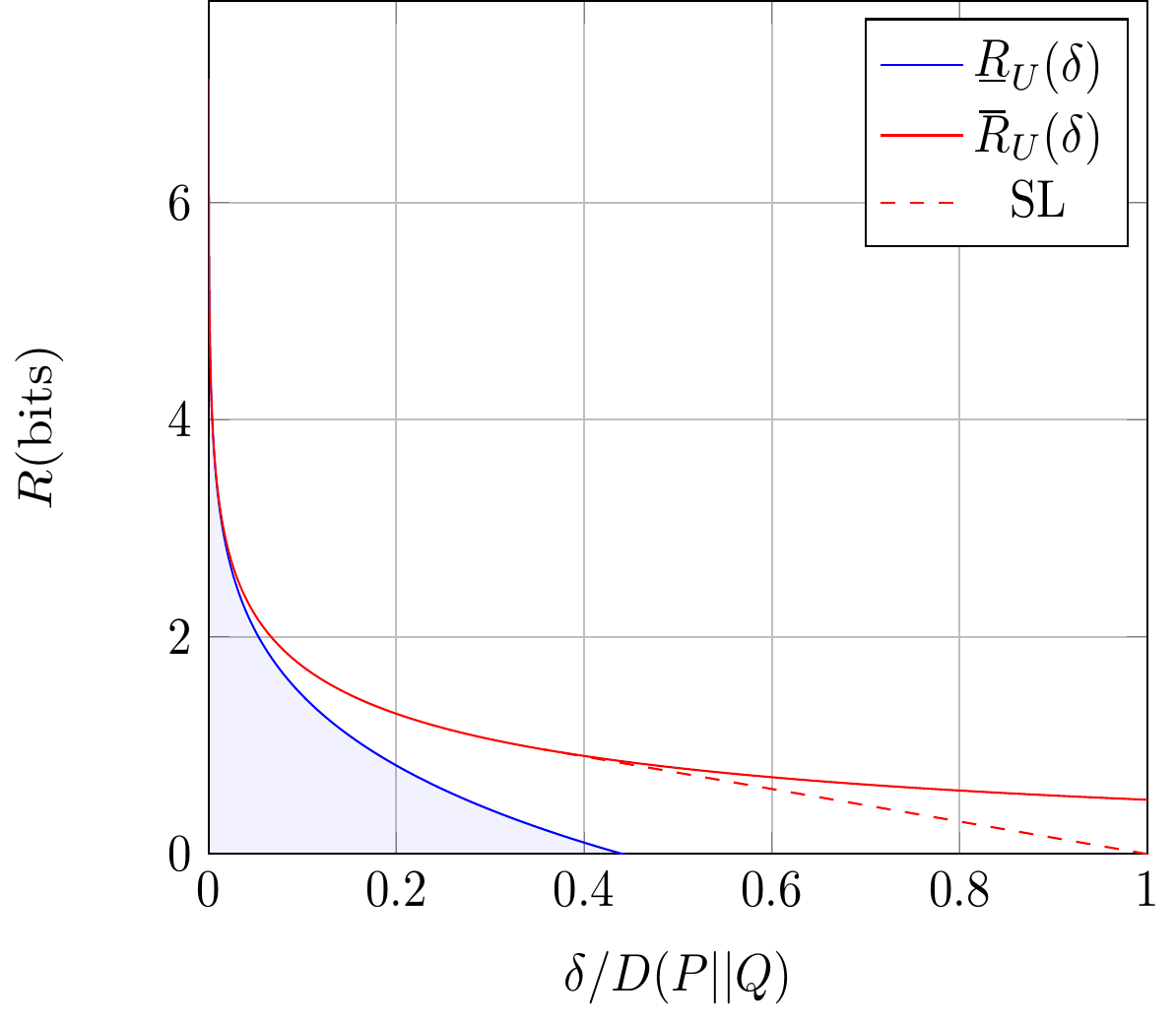}
		\caption{Bounds for $R_U$ curve for the case where $X \sim \cN(0,1)$ under $\cH_0$ and $X \sim \cN(\mu,1)$ under $\cH_1$ for $\mu = \sqrt{20}$. $U$ has the same distribution as the LLR $L \sim \cN(10, 20)$. The lower bound $\uR_U$ is drawn with blue color and the shaded region underneath is unachievable. The upper bound $\oR_U$ is drawn with red color, and its tightened version is drawn with a dashed line, denoted as SL. The true $R_U$ curve lies between $\uR_U$ and SL.}
		\label{fig:comparison}
	\end{figure}
	
	\subsection{Asymptotic Behavior of $R_U(\delta)$}
	Although Figure \ref{fig:comparison} suggests that $\uR_U$ and $\oR_U$ match closely at high rates, it is not evident if they tend to infinity at the same rate. Therefore, the asymptotic behavior of the exact $R_U$ is still unknown.
	We will characterize this behavior in this section. We first derive another upper bound than $\oR_U$ and show that this new upper bound behaves the same as the lower bound $\uR_U$ asymptotically. Once again, refer to \eqref{eq:upper_derivation} and observe for a Gaussian $Z$ with variance $v$ and independent of $U$,
	\begin{equation}\label{eq:fisher}
	\begin{split}
	I(U;U+Z) = h(U+\sqrt{v} \tilde Z)-h(\tilde Z) - \frac 1 2\log{v},
	\end{split}
	\end{equation}
	where $\tilde Z$ is a standard Gaussian random variable. We obtain an upper bound to $h(U+\sqrt{v} \tilde Z)$ with a different method. Suppose $U$ has a differentiable probability density $p_U$. We use De Brujin's identity \cite[Chapter 17]{Cover2006}, which states
	\begin{equation}
	\frac{\partial}{\partial v} h(U + \sqrt{v} \tilde Z) = \frac 1 2 J(U+\sqrt{v} \tilde Z),
	\end{equation}
	where
	\begin{equation}
	J(X) := E\bigg[\Big(\frac{\partial}{\partial x} \log p_X(x)\Big)^2\Big|_{x = X}\bigg]
	\end{equation}
	is the Fisher information of a random variable $X$ with differentiable density $p_X$. We then resort to Taylor's theorem which implies
	\begin{equation}
	h(U+\sqrt{v} \tilde Z) \leq h(U) + \frac{v} 2 \sup_{s\geq 0} J(U + \sqrt{s}\tilde Z).
	\end{equation}
	A well-known convolution inequality for Fisher information states \cite[Chapter 17]{Cover2006} for random variables $X$ and $Y$ with differentiable densities
	\begin{equation}
	\frac 1 {J(X+Y)} \geq \frac 1 {J(X)} + \frac 1 {J(Y)}.
	\end{equation}
	Therefore, 
	\begin{equation}
	\begin{split}
	h(U+\sqrt{v} \tilde Z) &\leq h(U) + \frac{v} 2 \sup_{s\geq 0} \frac{J(U)J(\sqrt{s}\tilde Z)}{J(U)+J(\sqrt{s}\tilde Z)}\\
	& \stackrel{(a)}{=} h(U) + \frac{v} 2 \sup_{s\geq 0} \frac{J(U)}{s J(U)+1}\\
	& \stackrel{(b)}{=} h(U) + \frac{v} 2 J(U)
	\end{split}
	\end{equation}
	where (a) follows from $J(\sqrt{s}\tilde Z) = \frac{1}{s}$, and (b) follows from the fact that $J(U)$ is always non-negative. Substituting this upper bound into \eqref{eq:fisher}, we obtain
	\begin{equation}
	\begin{split}
	I(U;U+Z) &\leq h(U) + \frac v 2 J(U) -h(\tilde Z) - \frac 1 2\log{v}\\
	& = h(U) + \frac v 2 J(U) - \frac 1 2\log(2\pi e v).
	\end{split}
	\end{equation}
	Referring to \eqref{eq:parametric_upperbnd}, we have $v = 2\delta$ and obtain another upper bound to $R_U$ as
	\begin{equation}\label{eq:match_upper_bnd}
	R_U(\delta) \leq h(U) + \delta J(U) - \frac 1 2 \log(4 \pi e \delta).
	\end{equation}
	We intend to obtain a matching lower bound using $\uR_U$. To this end, we use the following inequalities valid for $\alpha > 0$ \cite[5.11(ii)]{NIST:DLMF}:
	\begin{equation}\label{eq:gamma_bounds}
	\begin{split}
	\log\Gamma(\alpha) \leq \alpha\log \alpha - \alpha - \frac 1 2 \log \alpha + \frac 1 2 \log(2\pi) + \frac {1}{12 \alpha}\\
	\log \alpha - \frac {1}{2\alpha} - \frac{1}{12\alpha^2} \leq \psi(\alpha) \leq \log \alpha - \frac {1}{2\alpha}.
	\end{split}
	\end{equation}
	Using \eqref{eq:gamma_bounds} we obtain from \eqref{eqn:lower_gap}
	\begin{equation}
	\begin{split}
	\uR_{U}(\alpha) \geq h(U) - \frac 1 2 \log(2\pi e \alpha^{-1}) - \frac 1 {6\alpha},\quad \delta(\alpha) \geq \frac{1}{2\alpha}.
	\end{split}
	\end{equation}
	Thus,
	\begin{equation}\label{eq:match_lowerbnd}
	R_U(\delta) \geq \uR_{U}(\delta) \geq h(U) - \frac 1 2 \log(4\pi e \delta) - \frac{\delta}{3}.
	\end{equation}
	Comparing \eqref{eq:match_lowerbnd} with \eqref{eq:match_upper_bnd}, one can characterize the high-rate behavior of $R_U$. We conclude this section with the following theorem that gives the asymptotic behavior.
	\begin{theorem}\label{thm:asymp_rate}
		\begin{equation} 
		\lim_{\delta \to 0} R_U(\delta) - h(U) + \frac 1 2 \log(4\pi e \delta) = 0.
		\end{equation}\hfill\qedsymbol
	\end{theorem}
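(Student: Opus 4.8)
The plan is to extract the asymptotics of $R_U(\delta)$ as $\delta \to 0$ by squeezing it between the upper bound \eqref{eq:match_upper_bnd} and the lower bound \eqref{eq:match_lowerbnd}, both of which have already been established in the preceding derivation. Concretely, rearranging \eqref{eq:match_upper_bnd} gives
\begin{equation}
R_U(\delta) - h(U) + \tfrac 1 2 \log(4\pi e \delta) \leq \delta J(U),
\end{equation}
and rearranging \eqref{eq:match_lowerbnd} gives
\begin{equation}
R_U(\delta) - h(U) + \tfrac 1 2 \log(4\pi e \delta) \geq -\tfrac{\delta}{3}.
\end{equation}
So the quantity in question is sandwiched in the interval $[-\delta/3,\ \delta J(U)]$, and since both endpoints tend to $0$ as $\delta \to 0$, the squeeze theorem gives the claim immediately.

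The only subtlety is making sure the two bounds are genuinely available at the relevant range of $\delta$. The lower bound \eqref{eq:match_lowerbnd} was derived from the parametric curve $\uR_U(\alpha)$ via the substitutions $\delta(\alpha) \geq \frac{1}{2\alpha}$ and the Stirling/digamma estimates \eqref{eq:gamma_bounds}, all valid for $\alpha > 0$; as $\alpha \to \infty$ we have $\delta(\alpha) = \log\alpha - \psi(\alpha) \to 0$, so the lower bound covers a neighborhood of $\delta = 0$. The upper bound \eqref{eq:match_upper_bnd} came from the De Bruijn/Fisher-information argument, which requires $U$ to have a differentiable density (the standing Assumption plus differentiability of $p_U$, as stated there) and $J(U) < \infty$; granted these, $\delta J(U) \to 0$. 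I would state these hypotheses explicitly at the top of the proof, or simply note that they are the ones under which \eqref{eq:match_upper_bnd} and \eqref{eq:match_lowerbnd} were derived, and then invoke both displays.

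I do not anticipate a genuine obstacle here — the theorem is essentially a corollary of the two inequalities already in hand, and the proof is three lines. The only place to be careful is the direction of the parametrization: one must check that as $\delta \downarrow 0$ the corresponding parameter $\alpha$ in $\uR_U(\alpha)$ runs off to $+\infty$ (so that the asymptotic estimates \eqref{eq:gamma_bounds} are the relevant regime), and conversely that $\delta(\alpha)$ is monotone in $\alpha$ so that "$\delta \to 0$" and "$\alpha \to \infty$" are interchangeable. This follows from $\frac{d}{d\alpha}(\log\alpha - \psi(\alpha)) < 0$, a standard fact about the digamma function, which I would cite rather than reprove. With that monotonicity noted, the squeeze argument closes the proof.
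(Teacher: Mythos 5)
Your proposal is correct and follows exactly the route the paper intends: the theorem is stated immediately after \eqref{eq:match_upper_bnd} and \eqref{eq:match_lowerbnd} precisely because it is the squeeze consequence of those two displays, and the paper gives no separate proof. Your extra care about the $\alpha \to \infty$ / $\delta \downarrow 0$ correspondence and the monotonicity of $\log\alpha - \psi(\alpha)$ is a welcome tightening of what the paper leaves implicit.
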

	
	\section{High-Rate Regime and Performance under Vector Quantization}
	
	\subsection{High-Rate Lattice Quantization}\label{sec:high_rate}
	
	Until this point, we have investigated the behavior of $R_U$ and characterized its exact asymptotic behavior. However, we have not yet proposed a concrete quantization scheme that attains $(R,\delta)$ pairs comparable with $R_U$. In this section, we will show that with simple quantization schemes, $R_U$ can be closely approached at high rates. More specifically, we study lattice quantization procedures --- a detailed reference is \cite{Conway}. As we focus on scalar (memoryless) quantization in one dimension, the quantization procedures we consider are simply described as 
	\begin{equation}
	q_r(U) := 2r \arg\min_{k \in \mathbb{Z}} |U -2kr|
	\end{equation}
	where $r$ is the covering radius. Consequently, $V = q_r(U)$ is a quantized version of $U$ with $|V-U| \leq r$. A visual representation is given in Figure \ref{fig:one_dim}.
	
	\begin{figure}
		\centering
		\includegraphics[scale=0.7]{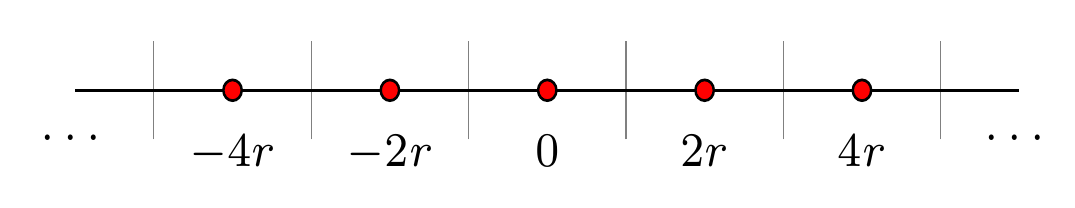}
		\caption{A visualization of $q_r(\cdot)$. The output is set to $2kr$ (drawn as red dots) whenever the input falls into the bin (separated with vertical lines) corresponding to $k$. Each bin is of radius $r$.}\label{fig:one_dim}\vspace*{-\baselineskip}
	\end{figure}

	At this point, we would like to relate the radius $r$ to the gap $\delta$. Referring to \eqref{eq:gap}, under lattice quantization $q_U(r)$, the gap is given by
\begin{equation}
\begin{split}
\delta &= \log E[e^{V-U}] - E[V-U]\\
&\leq E[e^{V-U}] - E[V-U] -1\\
&=  E[e^Z-Z-1]
\end{split}
\end{equation}
and since $|Z| \leq r$ surely, $e^Z-Z-1 \leq e^{r}-r-1$. Consequently, 
\begin{equation}\label{eq:delta_ach}
\delta \leq e^{r}-r-1,
\end{equation}
which suggests that in the small-$r$ regime, $\delta$ behaves quadratically. In fact, if $r \leq D$ for a constant $D$, then
\begin{equation}
\begin{split}
\delta &\leq e^{r}-r-1 = \sum_{k = 2}^\infty \frac{r^k}{k!}\\ &\leq \frac{r^2}{D^2}\sum_{k = 2}^\infty \frac{D^k}{k!} = \frac{r^2}{D^2}(e^{D}-D-1).
\end{split}
\end{equation}
The next step is to relate $r$ with $H(V)$, which is an upper bound to the expected length of an optimal lossless code as discussed in \eqref{eq:lossless}. Under mild regularity conditions on the distribution of $U$, the asymptotic behavior of $H(V)$ when $r \to 0$ can be characterized. 
\begin{theorem}[\kern-3pt\cite{csiszar_quantization}]\label{thm:csiszar}
	Suppose $H(\lfloor U\rfloor) < \infty$. Then, 
	\begin{equation}
	\lim_{r\to 0} H(V) + \log(2r) = h(U).
	\end{equation}
	\hfill\qedsymbol
\end{theorem}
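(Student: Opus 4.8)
The plan is to recognize $H(V)+\log(2r)$ as a genuine differential entropy and then compare it with $h(U)$ through a relative-entropy term. Let $B_k$ denote the bin $[2kr-r,2kr+r)$ of $q_r(\cdot)$, put $p_k:=P(U\in B_k)$, and let $g_r$ be the piecewise-constant probability density equal to $p_k/(2r)$ on $B_k$; equivalently, $g_r$ is the density of the dithered variable $\tilde U:=V+W$, where $W$ is uniform on $[-r,r]$ and independent of $U$. A one-line computation gives the identity
\begin{equation}
H(V)+\log(2r)=-\int g_r\log g_r\,dx=h(U)+D(p_U||g_r),
\end{equation}
where $D(\cdot||\cdot)\ge 0$ by Gibbs' inequality, so $H(V)+\log(2r)\ge h(U)$ for every $r>0$ and the theorem reduces to proving $D(p_U||g_r)\to 0$ as $r\to 0$. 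I would first note that the hypothesis $H(\lfloor U\rfloor)<\infty$ is exactly what keeps everything finite: conditioning on the integer part and using the chain rule, each unit interval meets at most $\lceil 1/(2r)\rceil+1$ bins, so $H(V)\le H(\lfloor U\rfloor)+H(V\mid\lfloor U\rfloor)\le H(\lfloor U\rfloor)+\log(\lceil 1/(2r)\rceil+1)<\infty$.

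Next I would establish pointwise convergence of the step densities. Since $g_r(x)$ is precisely the average of $p_U$ over the length-$2r$ bin containing $x$, and these bins shrink regularly to $x$, Lebesgue's differentiation theorem gives $g_r(x)\to p_U(x)$ for a.e.\ $x$, and Scheff\'e's lemma upgrades this to $L^1$ convergence; in probabilistic terms $\tilde U\to U$ almost surely, hence $\mathrm{law}(\tilde U)\to\mathrm{law}(U)$ weakly. (Along the dyadic radii $r=2^{-n-1}$ the partitions are nested and $H(V_{n+1}\mid V_n)\le\log 2$, so $H(V)+\log(2r)$ is even monotone non-increasing and the convergence below is monotone; this is convenient but not needed.)

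The main obstacle is the last step, $D(p_U||g_r)\to 0$: relative entropy (equivalently, differential entropy) is only \emph{lower} semicontinuous under weak convergence — the wrong direction — so a.e.\ or $L^1$ convergence of $g_r$ alone cannot rule out a leak of mass, and extra structure must be injected. I would supply a fixed reference probability density: under the standing assumption that $U$ is square integrable, take $\gamma=\cN(0,1)$. Joint weak lower semicontinuity of relative entropy gives
\begin{equation}
D(\mathrm{law}(U)||\gamma)\le\liminf_{r\to 0}D(\mathrm{law}(\tilde U)||\gamma),
\end{equation}
and expanding both sides as $D(\cdot||\gamma)=-h(\cdot)+\frac{1}{2}E[(\cdot)^2]+\frac{1}{2}\log(2\pi)$, together with $E[\tilde U^2]\to E[U^2]$ (here $|V-U|\le r$, $E[W^2]=r^2/3\to 0$, and $E|U|<\infty$), yields $-h(U)\le\liminf_{r\to 0}(-H(V)-\log(2r))$, i.e.\ $\limsup_{r\to 0}(H(V)+\log(2r))\le h(U)$. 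Combined with the reverse inequality from the first paragraph this proves the claim; the degenerate case $h(U)=-\infty$ is covered automatically, both sides then being infinite. If one prefers not to use square integrability, the same scheme runs with a Laplace reference density, which needs only $E|U|<\infty$ (automatic here, since $D(P||Q)<\infty$ together with $E_P[e^{-U}]\le 1$ forces $E_P|U|<\infty$); and for the fully general statement one instead shows the family $\{p_U\log(p_U/g_r)\}$ is uniformly integrable, bounding $\log^- g_r$ by an integrable maximal-type majorant built from the hypothesis $H(\lfloor U\rfloor)<\infty$, and applies a generalized dominated convergence theorem. Finally, under extra smoothness and compact support of $p_U$ the elementary Riemann-sum argument ($p_k=2r\,p_U(\xi_k)$ by the mean value theorem, whence $H(V)+\log(2r)=-\sum_k 2r\,p_U(\xi_k)\log p_U(\xi_k)\to h(U)$) is of course the quickest route.
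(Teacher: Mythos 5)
The paper does not prove this statement itself: Theorem~\ref{thm:csiszar} is imported verbatim as a citation to Csisz\'ar's 1973 work on quantization (the celebrated asymptotic quantization-entropy theorem going back to R\'enyi), so there is no in-paper proof to compare against. Your reconstruction, however, is essentially correct in the regime where the paper applies it. The central identity
\[
H(V)+\log(2r)=-\int g_r\log g_r\,dx=h(U)+D(p_U\|g_r)
\]
is right (both cross-entropies $-\int g_r\log g_r$ and $-\int p_U\log g_r$ agree because $g_r$ is constant on each bin), and with $D\ge 0$ it gives the lower bound $H(V)+\log(2r)\ge h(U)$. For the upper bound, testing against a fixed Gaussian reference $\gamma$ and using lower semicontinuity of $\mu\mapsto D(\mu\|\gamma)$ under weak convergence, together with $E[\tilde U^2]\to E[U^2]$, correctly yields $\limsup_r(H(V)+\log(2r))\le h(U)$; this needs the paper's standing assumption that $U$ is square integrable (otherwise the Gaussian reference blows up), and your observation that $E|U|<\infty$ follows from $D(P\|Q)<\infty$ and $E_P[e^{-U}]\le 1$ is sound. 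The degenerate case $h(U)=-\infty$ is indeed absorbed, since $E[\tilde U^2]$ stays bounded while $D(\mathrm{law}(\tilde U)\|\gamma)\to\infty$ forces $h(\tilde U)\to-\infty$.

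One small factual slip in a parenthetical you already flagged as ``not needed'': along dyadic radii $r=2^{-n-1}$, the \emph{centered} quantizer $q_r$ of the paper does not produce nested partitions (the bin boundaries are odd multiples of $r$, and an odd multiple of $2^{-n-1}$ is an even multiple of $2^{-n-2}$), so the claimed monotonicity of $H(V)+\log(2r)$ along that subsequence does not hold; nesting would require the uncentered floor quantizer. This does not affect the main argument. Note also that Csisz\'ar's own proof establishes the result under the weaker hypothesis $H(\lfloor U\rfloor)<\infty$ alone (via uniform-integrability/dominated-convergence arguments for $D(p_U\|g_r)\to 0$ rather than a fixed Gaussian reference), whereas your clean version additionally leans on finite second moment; you correctly identify the Laplace/UI route as what is needed for full generality.
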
 Using the above theorem, we have
	\begin{equation}\begin{split}
	H(V) &\leq h(U)-\frac 1 2 \log(4\pi e\delta)+ \frac 1 2 \log(\pi e (e^{D} - D -1)D^{-2})  + f(r)
	\end{split}
	\end{equation}
	where $f(r)$ is a function such that $\lim_{r\to0 } f(r) = 0$. Comparing the above with the asymptotic behavior of $R_U$ given in Theorem \ref{thm:asymp_rate}, we conclude that with high-rate lattice quantization, one can approach the boundary of the achievable region with at most $\frac 1 2 \log_2(\pi e (e^{D} - D -1)D^{-2})$ bits of difference. As $D \to 0$, the difference term tends to $\frac 1 2 \log_2\big(\frac{\pi e}{2}\big) \approx 1.047$ bits. Remembering that we work in the high-rate regime, i.e., we are allowed to send a large number of bits, a $1.047$-bit gap from the optimal curve does not seem to be significant. We summarize our results on the high-rate quantization as follows.
	\begin{theorem}\label{thm:lattice}
		Suppose $H(\lfloor U \rfloor)$ is finite. Then with one-dimensional lattice quantization of sufficiently small radius, the lower bound to the optimal curve $R_U$ can be approached within $\frac 1 2 \log_2(\frac{\pi e}{2}) \approx 1.047$ bits.\hfill\qedsymbol
	\end{theorem}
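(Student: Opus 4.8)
The plan is to chain together three facts that have already been established: (i) the achievability bound~\eqref{eq:delta_ach} relating the quantizer radius $r$ to the gap $\delta$, (ii) Csisz\'ar's entropy asymptotics (Theorem~\ref{thm:csiszar}) relating the codeword entropy $H(V)$ to $h(U)$ as $r\to 0$, and (iii) the exact high-rate behaviour of the lower bound $R_U$ given in Theorem~\ref{thm:asymp_rate}. The point is that all three involve the \emph{same} quantity $h(U)-\tfrac12\log(4\pi e\delta)$ up to an additive constant, so the gap between what lattice quantization achieves and the fundamental limit $R_U$ collapses to that constant.

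First I would fix a constant $D>0$ and restrict attention to radii $r\le D$. From~\eqref{eq:delta_ach} and the power-series bound immediately following it, $\delta \le (e^D-D-1)D^{-2}\,r^2$, equivalently $2r \ge \bigl(2D^2/(e^D-D-1)\bigr)^{1/2}\delta^{1/2}$, hence $\tfrac12\log(2r) \ge \tfrac12\log\sqrt{\delta} + \tfrac12\log\sqrt{2D^2/(e^D-D-1)}$. Taking negatives and adding $h(U)$, and writing $H(V) = h(U) - \tfrac12\log(2r) + o(1)$ as $r\to0$ from Theorem~\ref{thm:csiszar}, I obtain the stated inequality
\begin{equation*}
H(V) \le h(U) - \tfrac12\log(4\pi e\,\delta) + \tfrac12\log\bigl(\pi e (e^D-D-1)D^{-2}\bigr) + f(r),
\end{equation*}
where $f(r)\to 0$ as $r\to0$; here the $\tfrac12\log(4\pi e)$ and the $\tfrac12\log(\pi e)$ pieces are just the bookkeeping that matches the constant appearing in Theorem~\ref{thm:asymp_rate}. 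Since $H(V)$ upper-bounds the expected length (in nats) of an optimal lossless code by~\eqref{eq:lossless}, the pair $(H(V),\delta)$ is essentially achievable for every small $r$.

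Next I would compare this with the fundamental limit. By Theorem~\ref{thm:asymp_rate}, $R_U(\delta) = h(U) - \tfrac12\log(4\pi e\delta) + o(1)$ as $\delta\to0$. Subtracting, the excess rate used by the lattice scheme over $R_U(\delta)$ is at most $\tfrac12\log\bigl(\pi e(e^D-D-1)D^{-2}\bigr) + o(1)$ nats, i.e.\ $\tfrac1{2}\log_2\bigl(\pi e(e^D-D-1)D^{-2}\bigr)+o(1)$ bits. Finally, since $D$ is a free parameter, I would let $D\to0$: by L'H\^opital (or the series expansion $e^D-D-1 = D^2/2 + O(D^3)$) the term $(e^D-D-1)D^{-2}\to \tfrac12$, so the bound tends to $\tfrac12\log_2(\pi e/2)\approx 1.047$ bits, which is the claim. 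The main obstacle is not any single inequality but making the two limits ($r\to0$ and $D\to0$) interact cleanly: one must choose, for each target gap $\delta$, a radius $r(\delta)\to0$ and keep $D$ fixed while bounding, then argue that the residual $f(r)$ and the $o(1)$ terms vanish uniformly enough that optimizing over $D$ afterwards is legitimate — in effect a diagonal argument over the family of bounds indexed by $D$.
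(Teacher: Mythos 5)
Your plan follows the paper's proof exactly in structure: restrict to radii $r\le D$, bound $\delta$ quadratically in $r$ via \eqref{eq:delta_ach}, invert to lower-bound $\log(2r)$ in terms of $\delta$, plug into Csisz\'ar's entropy asymptotic (Theorem~\ref{thm:csiszar}), compare with Theorem~\ref{thm:asymp_rate}, and finally send $D\to 0$. That is the right route and the final inequality you write down is the correct one.

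However, the intermediate algebra as written contains two slips that matter, because the specific constant $\tfrac12\log_2(\pi e/2)$ \emph{is} the content of the theorem. First, $\delta\le (e^D-D-1)D^{-2}r^2$ gives $2r\ge \bigl(4D^2/(e^D-D-1)\bigr)^{1/2}\delta^{1/2}$, not $\bigl(2D^2/(e^D-D-1)\bigr)^{1/2}\delta^{1/2}$: squaring $2r$ produces a factor $4$, not $2$. Second, Theorem~\ref{thm:csiszar} says $H(V)=h(U)-\log(2r)+o(1)$, without the spurious $\tfrac12$ you placed in front of $\log(2r)$; with your coefficient the $-\tfrac12\log\delta$ term needed to match Theorem~\ref{thm:asymp_rate} does not appear, and even with the $\log(2r)$ coefficient corrected, your $\sqrt{2}$-vs-$2$ slip would leave the additive constant at $\tfrac12\log_2(\pi e)$ instead of $\tfrac12\log_2(\pi e/2)$. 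Fixing both slips, the chain reproduces the paper's bound and the $D\to0$ limit gives the stated $\approx 1.047$ bits. Your concern about the two limits $r\to0$ and $D\to0$ is not a real obstacle: fix $D$ first, let $r\to0$ to kill the $o(1)$ terms for that $D$, then shrink $D$; since the theorem only asserts approachability to within any $\epsilon>0$ there is no need for a diagonal argument.
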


Although Theorem \ref{thm:lattice} quantifies the gap in the limit $r \to 0$, one may also be interested to find an upper bound on $H(V)$ for strictly positive values of $r$. To this end, one might need more stringent regularity conditions than those of Theorem \ref{thm:csiszar} and work with nicely-behaved distributions. For the moment, consider $\tilde V = V + W$, where $W$ is independent of  $V$ and uniformly distributed in $[-r,r]$. Observe that the probability density of $\tilde V$ is a ``quantized'' version of the probability density of $U$. If $U$ has a nicely-behaved distribution and if $r$ is small, then the distribution of $\tilde V$ will not be very different from that of $U$; which is desirable for the sake of analysis. The family of the aforementioned nicely-behaved distributions are defined as follows.
	
	\begin{definition}[$v$-regular density, \cite{Kostina}]\label{defn:vregular}
		Given $v:\mathbb{R}\to\mathbb{R}$, a continuous and differentiable density function $p$ is called $v$-regular if $\big|\frac{d}{du}p(u)\big| \leq v(u)p(u)$.\hfill\qedsymbol
	\end{definition}
	In \cite[Theorem 8]{Kostina}, it has been proved that if $U$ has a $v$-regular density, then
	\begin{equation}\label{eqn:lattice}
	H(V) \leq h(U) - \log 2r + 2rC_U(r)
	\end{equation}
	where $C_U(r)$ is a function of $r$ depending on the density of $U$ and on the function $v$. Furthermore, if $v$ is Lipschitz-continuous almost everywhere and if $E[v(U)]$ is finite, then $C_U(r)$ can be shown to be bounded for finite $r$ --- see Appendix \ref{app:lipschitz}. In particular, if $v$ has Lipschitz constant $L$,
	\begin{equation}
	C_U(r) \leq \sqrt{J(U)} + 2Lr.
	\end{equation}
	We then obtain the following parametric curve
	\begin{equation}\label{eqn:lattice_upperbnd}
	\begin{split}
	R^{(L)}_U(r) &:= h(U) - \log 2r + 2r\sqrt{J(U)} + 4Lr^2\\
	\delta(r) & := e^{r}-r-1
	\end{split}
	\end{equation}
	which is an upper bound to the $(R,\delta)$ pairs achievable with lattice quantization. Figure \ref{fig:high_rate} illustrates the comparison of the lattice upper bound $R_U^{(L)}$ and the lower bound $\uR_U$ at high rates. The 1.047-bit gap in between is clearly observed.\\
	
	\begin{remark}\label{rem:covering}
		As discussed in \cite{Kostina}, the gap $\frac 1 2 \log_2\big(\frac{\pi e}{2}\big) \approx 1.047$ is due to the covering inefficiency of the one-dimensional lattice. If we perform a similar analysis under the mean-square distortion, the gap turns out to be exactly the same \cite{Kostina}. This is expected as the gap $\delta = \log(E[e^{V-U}]) - E[V-U]$ --- despite not being a distortion function --- behaves like $r^2/2$ for small $r$. Observe that for sufficiently smooth densities, the mean-square error under lattice quantization behaves exactly the same for sufficiently small $r$.\hfill\qedsymbol
	\end{remark}
	
	\begin{figure}[h!]\centering
		\includegraphics[scale = 0.65]{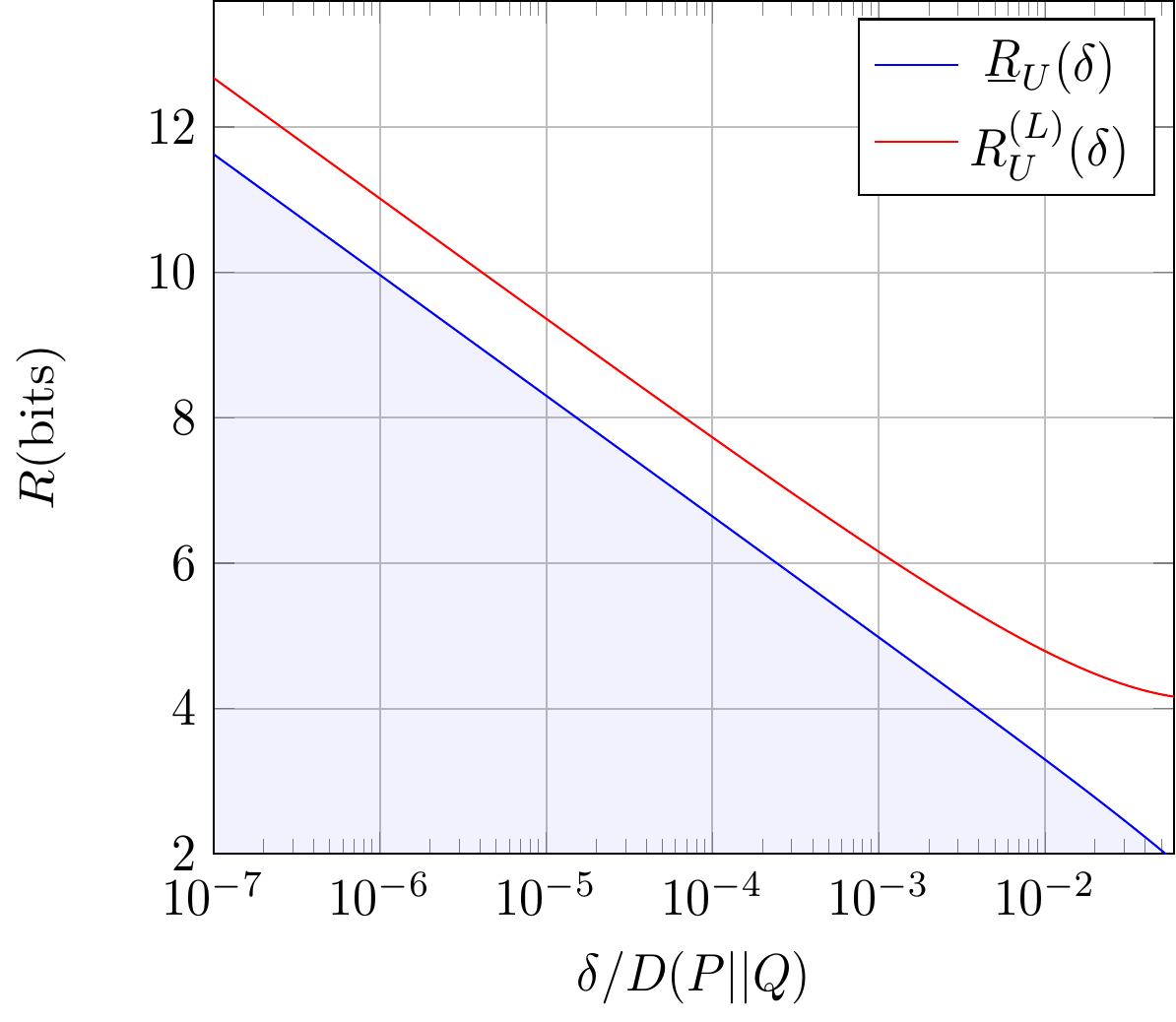}
		\caption{High-rate behaviors of the lower bound $\uR_U$, and the curve $R_U^{(L)}$ achievable with one-dimensional lattice quantization. Same hypothesis testing setup in Figure \ref{fig:comparison} is considered, where $U$ has a Gaussian distribution. It is not difficult to see that a Gaussian distribution is $v$-regular. The 1.047-bit difference between $R_U^{(L)}$ and $\uR_U$, mentioned in Theorem \ref{thm:lattice}, is visible.}
		\label{fig:high_rate}
	\end{figure}
	
	In light of our results in this section, the remote node's strategy in the high-rate regime is apparent. At time $t$, the node (i) calculates its LLR $L_t$, (ii) obtains the lattice-quantized score $S_t = q_r(L_t)$, and (iii) sends $S_t$ with an optimal variable-length lossless code designed for $P$, i.e., for $\cH_0$. This strategy ensures the approach to the optimal curve within $1.047$ bits.
	
	One might ask what is the expected number of bits sent under $\cH_1$ although the code is designed for $\cH_0$. It is known that if the true distribution of the quantized score $S_t$ is given by $Q(S_t)$, then an optimal lossless code designed for $P(S_t)$ yields the expected number of bits at most 
	\begin{equation}\begin{split}
	H_Q(S_t)\log_2e + D(Q(S_t) || P(S_t))\log_2e\leq H_Q(S_t)\log_2e + D(Q||P)\log_2e,
	\end{split}
	\end{equation} where the inequality is due to the data processing inequality. Hence, if $Q$ is also absolutely continuous with respect to $P$, $D(Q||P)$ is finite, and the expected number of bits sent under $\cH_1$ is finite as well.
	
	We end this section by raising the following question: ``Is it possible to eliminate the $\frac 1 2 \log_2\big(\frac{\pi e}{2}\big) $ gap with more efficient lattice coverings?'' For high-dimensions, it is known that covering-efficient lattices exist \cite{Kostina}. Hence, an obvious attempt would be to allow the quantization of multiple samples, i.e., at time $tk$, the remote node records $L_{(t-1)k+1},\dots,L_{tk}$ and sends the $k$-dimensional lattice-quantized version. Although this approach might alleviate the covering inefficiency problem, it is not certain that for such procedures the $R_U$ curve remains the same. We shall study in the next section the behavior of $R_U$ when vector quantization is allowed.

	\subsection{Best Performance under Vector Quantization}\label{sec:multiple_quant}
	
	This section addresses the problem of quantizing multiple samples instead of one. We continue to study memoryless schemes, that is, at time $tk$, the $k$-tuple of LLRs $(L_{(t-1)k+1},\dots,L_{tk})$ is quantized and sent. We first highlight a key observation in the proof of Theorem \ref{thm:theta} --- given in Appendix \ref{app:theta}. Observe that for a choice of quantization function $f$, the (optimal) Neyman--Pearson test pertaining to the quantized $S_t = f(L_t)$ yields the type-II error rate $D(P(S)||Q(S))$ and the rate is optimized over possible $f$'s to obtain $\theta(R)$. Adapting this observation to the vector quantization case, we have the score $S_{t,k} = f(L_{(t-1)k+1},\dots,L_{tk})$, where $f:\mathbb{R}^k \to \mathbb{R}$ is a simple function, and we want to optimize $D(P(S)||Q(S))$ under the constraint $I(L_1,\dots,L_k;S) \leq kR$ to obtain an upper bound. Using the Donsker--Varadhan representation of $D(P(S)||Q(S))$ as we did in the proof of Theorem \ref{thm:theta}, we therefore have the upper bound to the best achievable type-II error exponent, analogous to \eqref{eqn:curve}:
	\begin{equation}\label{eq:multi_suff}
	\begin{split}
	\tilde\theta_{L,k}(R) :=  \frac 1 k \bigg(\sup_{p_{S|L_1,\dots,L_k}} & E_P[S]- \log E_P\Big[\exp\Big(S-L_{1,k}\Big)\Big]\bigg)\\
	\text{s.t.}& \quad I_P(L_1,\dots,L_k; S)\leq kR.
	\end{split}
	\end{equation}
%	As mentioned, $L_{t,k}$ is a sufficient statistic, and therefore
%	\begin{equation}\label{eq:markov}
%	\begin{split}
%	\cH  \markov  L_{t,k} \markov (L_{(t-1)k+1},\dots,L_{tk})
%	\end{split}
%	\end{equation}
%	is a Markov chain. Let $\tilde S $ be a quantized version of $L_{1,k}$. \eqref{eq:markov} implies
%	\begin{equation}
%	\tilde S \markov  (L_{1,k},\cH)\markov (L_{1},\dots,L_{k}).
%	\end{equation}
%	Thus, $I(L_1,\dots,L_k;\tilde S | L_{1,k},\cH ) = 0$ which implies
%	\begin{equation}
%	I_P(L_1,\dots,L_k;\tilde S | L_{1,k}) = 0,
%	\end{equation}
%(and incidentally $I_Q(L_1,\dots,L_k;\tilde S | L_{1,k}) = 0$).
%	Moreover, 
%	\begin{equation}
%	\begin{split}
%	&I_P(L_1,\dots,L_k;\tilde S)\\
%	&= I_P(L_1,\dots,L_k,L_{1,k};\tilde S)\\
%	&= I_P(L_{1,k};\tilde S) + I_P(L_1,\dots,L_k;\tilde S | L_{1,k})\\
%	&= I_P(L_{1,k};\tilde S).
%	\end{split}
%	\end{equation}
Define $L_{1,k} := L_1+\dots + L_k$ and observe that $I_P(L_{1,k}; S)\leq I_P(L_1,\dots,L_k; S)$. Hence,  
\begin{equation}\label{eq:multi_suff2}
\begin{split}
\theta_{L,k}(R) :=  \frac 1 k \bigg(\sup_{p_{S|L_1,\dots L_k}} & E_P[S]- \log E_P\Big[\exp\Big(S-L_{1,k}\Big)\Big]\bigg)\\
\text{s.t.}& \quad I_P(L_{1,k}; S)\leq kR
\end{split}
\end{equation}
is an upper bound to $\tilde\theta_{L,k}(R)$ as the optimization domain is enlarged (it is not difficult to show that $\theta_{L,k}$ is in fact equal to $\tilde\theta_{L,k}$). Since both the objective and constraint functions in \eqref{eq:multi_suff2} only depend on $L_{1,k}$, the feasible set can be reduced to the set of channels from $L_{1,k}$ to $S$. Hence, 
\begin{equation}\label{eq:multi_suff3}
\begin{split}
\theta_{L,k}(R) =  \frac 1 k \bigg(\sup_{p_{S|L_{1,k}}} & E_P[S]- \log E_P\Big[\exp\Big(S-L_{1,k}\Big)\Big]\bigg)\\
\text{s.t.}& \quad I_P(L_{1,k}; S)\leq kR.
\end{split}
\end{equation}

	Note the resemblance of \eqref{eq:multi_suff3} to \eqref{eqn:curve}. Consequently, all results for one-dimensional quantization directly translate to the multi-dimensional case and we obtain the following upper bound to the boundary of the achievable region:
	\begin{equation}\label{eqn:multi_curve}
	\begin{split}
	\theta_{U,k}(R) :=  \frac 1 k \bigg(\sup_{p_{V|U}} &\quad E_P[V]- \log E_P[\exp(V-U_k)]\bigg)\\
	\text{s.t.}& \quad I_P(U_k; V)\leq kR\\
	=  \frac 1 k \theta_{U_k}(k&R)
	\end{split}
	\end{equation}
	where $U_k$ is the random variable that has the same distribution as $L_{1,k}$. Following the same steps we have taken for the one-dimensional case, we can also obtain the gap function and the rate-gap curve for the $k$-dimensional case as
	\begin{equation}\label{eqn:multi_curve2}
	\delta_{U,k}(R) =  \frac 1 k \delta_{U_k}(kR),\quad R_{U,k}(\delta) = \frac 1 k R_{U_k}(k\delta).
	\end{equation}
	The previously obtained upper and lower bounds for the one-dimensional case are therefore valid for $k$-dimensional case as well:
	\begin{equation}
	\uR_{U,k}(\delta) := \frac 1 k \uR_{U_k}(k\delta) \leq R_{U,k}(\delta) \leq \frac 1 k \oR_{U_k}(k\delta) =: \oR_{U,k}(\delta).
	\end{equation}
	For various $k$ values, the lower bounds $\uR_{U,k}(\delta)$ and upper bounds $\oR_{U,k}(\delta)$ are drawn in Figure \ref{fig:multiple} for the same scenario in Figure \ref{fig:comparison}.
	\begin{figure}[h!]
		\centering
		\includegraphics[scale=0.65]{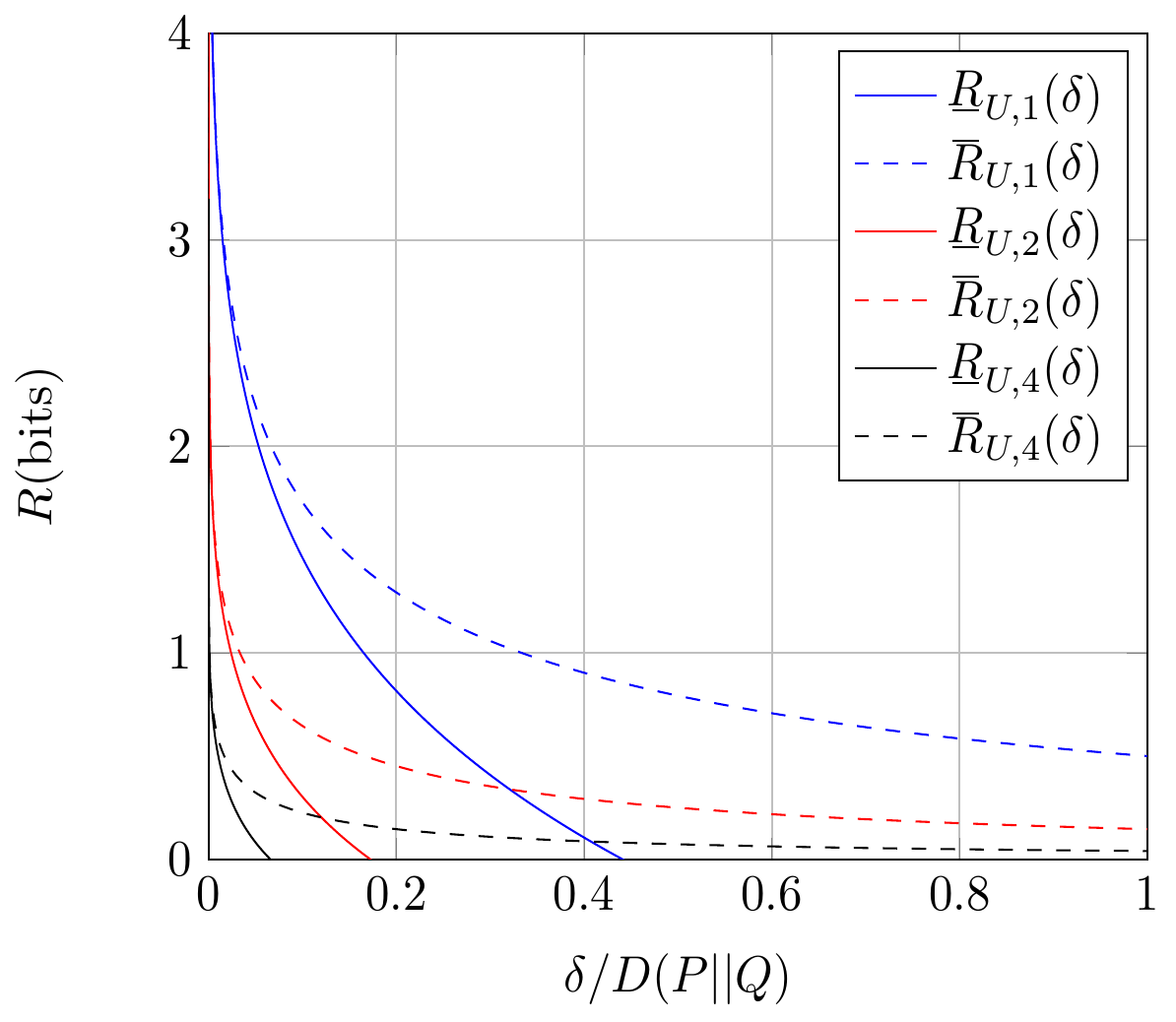}
		\caption{Upper and lower bounds under $k$-dimensional vector quantization, for $k = 1,2,4$. Again, the same setup in Figure \ref{fig:comparison} is considered. The lower bounds $\uR_{U,k}$ are drawn as solid curves and the upper bounds $\oR_{U,k}$ are drawn as dashed curves. Although all upper and lower bounds are pointwise decreasing with $k$, it is not certain that the true $R_{U,k}$ curves exhibit the same behavior.}
		\label{fig:multiple}
	\end{figure}
	
	Observe that $R_{U,k}$'s obey the subadditive relation
	\begin{equation}
	(k+l)R_{U,k+l}(\delta) \leq k R_{U,k}(\delta) + l R_{U,l}(\delta)
	\end{equation}
	as the admissible strategies for the quantization of $k+l$ samples include the strategies that quantize $k$ samples and $l$ samples separately. However, note that this does not imply $R_{U,k}(\delta) \leq R_{U,l}(\delta)$ for $k \geq l$.  Nevertheless, from a well-known result on subadditive sequences, e.g. \cite{Gallager}, we know 
	\begin{equation}
	\lim_{k \to \infty} R_{U,k}(\delta) = \inf_k R_{U,k}(\delta)
	\end{equation}
	and using the upper bound \eqref{eq:upper_bnd}, we obtain for $\delta > 0$,
	\begin{equation}\label{eq:ratetozero}
	\begin{split}
	\lim_{k \to \infty} R_{U,k}(\delta) &\leq \lim_{k \to \infty} \frac 1 {2k} \log\bigg(1+\frac{\Var(U_k)}{2k\delta}\bigg)\\
	&= \lim_{k \to \infty} \frac 1 {2k} \log\bigg(1+\frac{\Var(U_1)}{2\delta}\bigg)\\
	&= 0.
	\end{split}
	\end{equation}
	This is in contrast with the classical rate-distortion function as it is already defined for $k \to \infty$. 
	
	Although \eqref{eq:ratetozero} shows that the lower bound $R_{U,k}$ tends to zero, this is also true for the true boundary curve.  A simple achievability scheme at large $k$ is as follows: Since the remote node records the data until $k$, it can make its own decision $\cH_0$ or $\cH_1$ and send the one-bit result to the fusion center. The average number of bits sent is then kept arbitrarily small and since the node makes the estimate of the true hypothesis based on an optimal test, the type-II error rate will be close to $D(P||Q)$, which is the best possible decay rate. Although such a design might seem appealing in terms of the performance of type-II error rate, the peripheral node needs to have sufficient computational power as a requirement of this design. Also recall that in the end of Section \ref{sec:high_rate}, we mentioned that the covering efficiency of lattices may improve at high dimensions. However, \eqref{eq:ratetozero} and the strategy we have just described suggest that there is no need for lattice quantization for high dimensions --- the node only sends its one-bit decision.
	
	\section{Multiple-Node Case}\label{sec:multiple_node}
	
	All the previous results obtained for the single-node case can be extended to the multiple-node case. This is due to the fact that the data is \emph{independent across nodes}. To make this extension, we provide a modified definition of achievable pairs. Recall that at time $t$, node $i$ observes data coming from $P^{(i)}$ under $\cH_0$, and from $Q^{(i)}$ under $\cH_1$; calculates the LLR $L_t^{(i)}$, and compresses it with a simple function $S^{(i)}_t = f_t^{(i)}(L_t^{(i)})$. Furthermore, as discussed before, if 
	\begin{equation}
	\frac 1 t \sum_{\tau = 1}^t H_P(S_\tau^{(i)}) \leq R_i/\log_2e,
	\end{equation}
	then the compressed scores can be sent losslessly with an average number of bits less than $R_i$. After recalling the system dynamics, we provide the modified version of Definition \ref{def:achievable_single}.
	
	\begin{definition}Given $\{P^{(i)}\}_{i=1}^m$ and $\{Q^{(i)}\}_{i=1}^m$, $(R_1,\dots,R_m,\theta)$ is an achievable pair if there exists $m$ sequences $\{f_t^{(1)}\},\dots,\{f_t^{(m)}\}$ of simple functions and a sequence of thresholds $\{\eta_t\}$ such that 
		\begin{itemize}
			\item[(a)] $\frac 1 t \sum_{\tau = 1}^t H_P(S_\tau^{(i)}) \leq R_i$, for all $t$ and for all $i$
			\item[(b)] $\lim_{t \to \infty}\alpha_t = 0$
			\item[(c)] $\liminf_{t \to \infty}\frac 1 t \log \frac 1 {\beta_t} \geq \theta$
		\end{itemize}
		where $S_t^{(i)} = f_t^{(i)}(L_t^{(i)})$, and $\alpha_t$, $\beta_t$ are the type-I and type-II errors respectively, defined in \eqref{eq:errors_def}.\hfill\qedsymbol
	\end{definition}
	
	Let $\theta_t(R_1,\dots,R_m) := \sum_{i=1}^m\theta_t^{(i)}(R_i)$, where 
	\begin{equation}
	\begin{split}
	\theta^{(i)}_t(R_i) := \sup_{\substack{\{f^{(i)}_1,\dots,f^{(i)}_t\}\\ \in \cF_t(R_i)}} \frac 1 t \sum_{\tau = 1}^t \bigg(E_P[S^{(i)}_\tau]-\log E_P[e^{S^{(i)}_\tau-L^{(i)}_\tau}]\bigg)
	\end{split}
	\end{equation}
	is defined as in Theorem \ref{thm:theta}. Observe that
	\begin{equation}
	\begin{split}
	\theta_t(R_1,\dots,R_m) &= \sum_{i=1}^m\theta_t^{(i)}(R_i)\\
	&= \sum_{i=1}^m\sup_{\substack{\{f^{(i)}_1,\dots,f^{(i)}_t\}\\ \in \cF_t(R_i)}} \frac 1 t \sum_{\tau = 1}^t \bigg(E_P[S^{(i)}_\tau]-\log E_P[e^{S^{(i)}_\tau-L^{(i)}_\tau}]\bigg)\\
	&=\sup_{\substack{\{f^{(i)}_1,\dots,f^{(i)}_t\}\\ \in \cF_t(R_i),\ i\leq m}}\frac 1 t \sum_{\tau = 1}^t\sum_{i=1}^m\bigg(E_P[S^{(i)}_\tau]-\log E_P[e^{S^{(i)}_\tau-L^{(i)}_\tau}]\bigg)
	\end{split}
	\end{equation}
	and 
	\begin{equation}
	\begin{split}
	\sum_{i=1}^m\log E_P[e^{S^{(i)}_\tau-L^{(i)}_\tau}] &\stackrel{(a)}{=} \log E_P[e^{\sum_{i=1}^m (S^{(i)}_\tau-L^{(i)}_\tau)}] \\
	&\stackrel{(b)}{=}  \log E_Q[e^{\sum_{i=1}^m S^{(i)}_\tau}]\\
		&\stackrel{(c)}{=}  \sum_{i=1}^m\log E_Q[e^{ S^{(i)}_\tau}]
	\end{split}
	\end{equation}
	where $(a)$, $(c)$ are due to the independence assumption across nodes and $(b)$ is because $e^{-\sum_{i=1}^m L^{(i)}_\tau} = \prod_{i=1}^m \frac{dP^{(i)}}{dQ^{(i)}}=\frac{dP}{dQ}$ is the Radon--Nikodym derivative of $P$ with respect to $Q$. Therefore, we have
	\begin{equation}\label{eq:theta_multi_node}
	\begin{split}
	\theta_t(R_1,\dots,R_m) = \sum_{i=1}^m\sup_{\substack{\{f^{(i)}_1,\dots,f^{(i)}_t\}\\ \in \cF_t(R_i)}}\frac 1 t \sum_{\tau = 1}^t\bigg(E_P[S^{(i)}_\tau]-\log E_Q[e^{S^{(i)}_\tau}]\bigg).
	\end{split}
	\end{equation}
	Comparing \eqref{eq:theta_multi_node} with \eqref{eq:theta_single_node} and following exactly the same steps in the proof of Theorem \ref{thm:theta}, we obtain the analogous version of Theorem \ref{thm:theta} (ii):
	\begin{equation}\label{eq:multiple_node_theta}
	\begin{split}
	\theta^*(R_1,\dots,R_m) :=& \sup\{\theta:(R_1,\dots,R_m,\theta) \text{ achievable}\}\\
	=&\lim_{t \to \infty} \theta_t(R_1,\dots,R_m)\\
	=&\sum_{i=1}^m \lim_{t\to \infty} \theta_t^{(i)}(R_i)\\
	=&\sum_{i=1}^m \breve\theta_1^{(i)}(R_i),
	\end{split}
	\end{equation}
	which characterizes the boundary of the optimal curve for rate constraints $(R_1,\dots,R_m)$. We also know that $\breve\theta_1(R)$ is upper bounded by $\theta_U(R) = D(P||Q) - \delta_U(R)$ from Corollary \ref{cor:upper_bnd}, hence
	\begin{corollary}
		\begin{equation}
		\theta^*(R_1,\dots,R_m) \leq \sum_{i=1}^m \Big[D(P^{(i)}||Q^{(i)}) - \delta_{U_i}^{(i)}(R_i) \Big]
		\end{equation}
		where $U_i$ has the same distribution as $L_1^{(i)}$, LLR of node $i$.\hfill\qedsymbol
	\end{corollary}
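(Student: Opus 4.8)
The plan is to obtain the bound as an immediate consequence of the single-node analysis applied separately to each node, together with the additive characterization already established in \eqref{eq:multiple_node_theta}. First I would recall that \eqref{eq:multiple_node_theta} gives the exact identity
\[
\theta^*(R_1,\dots,R_m) = \sum_{i=1}^m \breve\theta_1^{(i)}(R_i),
\]
so that the quantity of interest decouples across nodes and it suffices to upper bound each term $\breve\theta_1^{(i)}(R_i)$ on its own.

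For a fixed node $i$, I would invoke Corollary \ref{cor:upper_bnd} with the pair $(P,Q)$ there taken to be $(P^{(i)},Q^{(i)})$ and the random variable $U$ taken to be $U_i$, which by construction has the same law as the LLR $L_1^{(i)}$. This yields $\breve\theta_1^{(i)}(R_i) \le \theta_{U_i}^{(i)}(R_i)$. Next I would use the very definition of the gap function from \eqref{eq:gap}, namely $\delta_U(R) := D(P||Q) - \theta_U(R)$, specialized to node $i$, to rewrite $\theta_{U_i}^{(i)}(R_i) = D(P^{(i)}||Q^{(i)}) - \delta_{U_i}^{(i)}(R_i)$. Combining the last two relations and summing over $i = 1,\dots,m$ gives the claimed inequality.

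The argument involves no new estimate; the only point to be careful about is compatibility of the regularity hypotheses. The single-node quantities $\theta_U$ and $\delta_U$ were introduced under the standing assumptions that the LLR is square integrable (so that $\Var(U_i)$ is finite) and admits a density; these are exactly the conditions imposed on each $L_1^{(i)}$, and they are inherited verbatim from the single-node development, so no additional work is required. Thus the ``hard part,'' such as it is, is merely the bookkeeping of matching the per-node objects $P^{(i)}, Q^{(i)}, U_i$ to the generic $P, Q, U$ appearing in Corollary \ref{cor:upper_bnd} and \eqref{eq:gap}; once that is in place the corollary follows in one line.
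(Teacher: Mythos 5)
Your proof is correct and matches the paper's own reasoning exactly: both decouple the problem across nodes via the identity $\theta^*(R_1,\dots,R_m)=\sum_i \breve\theta_1^{(i)}(R_i)$ from \eqref{eq:multiple_node_theta}, apply Corollary \ref{cor:upper_bnd} per node, and substitute the definition $\theta_{U_i}(R_i)=D(P^{(i)}||Q^{(i)})-\delta_{U_i}(R_i)$ from \eqref{eq:gap} before summing.
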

	
	One might also consider an extension of the problem to sum-rate constraints. Namely, the communication constraint is redefined as $R_1 + \dots  + R_m \leq R_{\text{sum}}$. The characterization in \eqref{eq:multiple_node_theta} is readily adapted to sum-rate constraint as
	\begin{equation}
	\begin{split}
	\theta^*(R_{\text{sum}}) := \sup\{&\theta:(R_1,\dots,R_m,\theta) \text{ achievable},\ R_1+\dots+R_m= R_{\text{sum}}\}\\
	&\kern-33pt=\max_{R_1+\dots + R_m = R_{\text{sum}}}\sum_{i=1}^m \breve\theta_1^{(i)}(R_i),
	\end{split}
	\end{equation}
	and
	\begin{equation}\label{eq:sum_rate_upper}
	\theta^*(R_{\text{sum}})\leq \max_{R_1+\dots + R_m = R_{\text{sum}}}\sum_{i=1}^m \theta_{U_i}(R_i).
	\end{equation}
	The $(\theta, R)$ pairs that lie above the curve on right-hand side are unachievable under the sum-rate constraint. Hence, one may be interested in the optimal rate sharing that maximizes the right-hand side with an aim to characterize an unachievable region. We first provide a simple property of a possible optimal allocation: Intuitively, the optimal rate sharing must not exclude the more informative nodes.
	\begin{definition}[\kern-3pt\cite{Blackwell}]
		A node $j$ is said to be more informative\footnote{Although the definition of more informativeness is different for $m$-ary hypothesis tests, it is shown in \cite{Blackwell} that for $m=2$, the definition given here is equivalent.} than node $i$, and denoted as $i \prec j$ if there exists a probability transition kernel $w:\cB(\mathbb{R})\times\mathbb{R} \to \mathbb{R}_+$ such that for all $\cA \in \cB(\mathbb{R})$
		\begin{equation}
		\int w(\cA,x) dP^{(j)}(x) = P^{(i)}(\cA)
		\end{equation}
		and 
		\begin{equation}
		\int w(\cA,x) dQ^{(j)}(x) = Q^{(i)}(\cA).
		\end{equation}\hfill\qedsymbol
		%	If $w$ is the identity transformation, we denote $i \equiv j$.
	\end{definition}
	
	With the above definition, we have 
	\begin{theorem}\label{thm:more_informative}
		Let $\cR^* \subset \mathbb{R}^m$ be the set of optimal allocations in \eqref{eq:sum_rate_upper}. If $i \prec j$ for some $j$, then $\cR^*$ contains a $(R_1^*,\dots,R_m^*)$ such that $R_i^* > 0$ only if $R_j^* > 0$.
	\end{theorem}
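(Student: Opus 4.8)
The plan is to reduce the claim to a single monotonicity property --- that $i\prec j$ forces $\theta_{U_j}(R)\geq\theta_{U_i}(R)$ for all $R\geq 0$ --- and then to finish with a rate-reallocation argument. For the latter, suppose toward a contradiction that \emph{every} allocation in $\cR^*$ has $R_i^*>0$ and $R_j^*=0$. Fix one such allocation and transfer node $i$'s entire budget to node $j$, setting $\hat R_i=0$, $\hat R_j=R_i^*$, and $\hat R_k=R_k^*$ otherwise; this is still feasible for \eqref{eq:sum_rate_upper}. The resulting change in the objective is $\big(\theta_{U_i}(0)-\theta_{U_i}(R_i^*)\big)+\big(\theta_{U_j}(R_i^*)-\theta_{U_j}(0)\big)$, which equals $\theta_{U_j}(R_i^*)-\theta_{U_i}(R_i^*)\geq 0$ after recalling that $\theta_{U_i}(0)=\theta_{U_j}(0)=0$ (as observed after \eqref{eqn:curve}) and invoking the monotonicity property. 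Hence $\hat R$ is also optimal, yet has $\hat R_i=0$ --- contradicting the assumption. Therefore $\cR^*$ contains an allocation with $R_i^*=0$ or $R_j^*>0$, which is the asserted property.

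To prove the monotonicity, I would first re-express $\theta_{U_i}(R)$ in terms of the raw observation $X^{(i)}$ rather than its LLR. Since the objective in \eqref{eqn:curve} can be rewritten as $E_P[V]-\log E_Q[e^V]$ (cf.\ the manipulation leading to \eqref{eq:theta_multi_node}), it depends on the quantization channel only through the law of $V$ under $\cH_0$ and under $\cH_1$; and since the LLR $L^{(i)}=\log\frac{dP^{(i)}}{dQ^{(i)}}(X^{(i)})$ is a sufficient statistic for $\{P^{(i)},Q^{(i)}\}$, one obtains
\[
\theta_{U_i}(R)=\sup\big\{E_P[V]-\log E_Q[e^V]\;:\; p_{V|X^{(i)}},\ I_P(X^{(i)};V)\leq R\big\}.
\]
The inclusion ``$\leq$'' follows by composing a channel on $L^{(i)}$ with the map $L^{(i)}=\ell_i(X^{(i)})$, which leaves the objective unchanged and, because $V$ is then conditionally independent of $X^{(i)}$ given $L^{(i)}$, satisfies $I_P(X^{(i)};V)=I_P(L^{(i)};V)$. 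For ``$\geq$'', given a feasible $p_{V|X^{(i)}}$, let $K_i(\cdot,l)$ be the conditional law of $X^{(i)}$ given $L^{(i)}=l$ --- common to $\cH_0$ and $\cH_1$ by sufficiency --- and set $p_{V|L^{(i)}}(\cdot,l):=\int p_{V|X^{(i)}}(\cdot,x)\,K_i(dx,l)$; reconstructing $X^{(i)}$ through $K_i$ keeps the laws of $V$ under both hypotheses intact (so the objective is preserved), while the Markov chain $L^{(i)}\to X^{(i)}_{\mathrm{sim}}\to V$ gives $I_P(L^{(i)};V)\leq I_P(X^{(i)}_{\mathrm{sim}};V)=I_P(X^{(i)};V)\leq R$ by data processing.

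With $\theta_{U_i}$ expressed as a supremum over channels from $X^{(i)}$, the monotonicity becomes a single garbling step. Let $w$ be the kernel witnessing $i\prec j$, so that $\int w(\cdot,x)\,dP^{(j)}(x)=P^{(i)}$ and $\int w(\cdot,x)\,dQ^{(j)}(x)=Q^{(i)}$. Given any $p_{V|X^{(i)}}$ feasible at rate $R$ for node $i$, precompose it with $w$ to obtain $p_{V|X^{(j)}}(\cdot,x_j):=\int p_{V|X^{(i)}}(\cdot,x_i)\,w(dx_i,x_j)$. Because $w$ maps $P^{(j)}$ to $P^{(i)}$ and $Q^{(j)}$ to $Q^{(i)}$, the law of $V$ under $\cH_0$ (resp.\ $\cH_1$) produced by this channel at node $j$ equals the one produced by $p_{V|X^{(i)}}$ at node $i$, so the objective value is identical; and along the chain $X^{(j)}\to X^{(i)}_{\mathrm{sim}}\to V$ with $X^{(i)}_{\mathrm{sim}}$ distributed as $P^{(i)}$ under $\cH_0$, data processing gives $I_P(X^{(j)};V)\leq I_{P^{(i)}}(X^{(i)};V)\leq R$. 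Hence the new channel is feasible for node $j$ at rate $R$; taking the supremum over node $i$'s channels yields $\theta_{U_j}(R)\geq\theta_{U_i}(R)$, exactly what the reallocation step requires.

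I expect the main obstacle to be the measure-theoretic bookkeeping inside the monotonicity proof: verifying that the compositions $p_{V|L^{(i)}}$ and $p_{V|X^{(j)}}$ are bona fide transition kernels on the underlying Borel spaces, that the marginal identities are justified by disintegration/Fubini, and that the data-processing inequality applies to the (possibly infinite) mutual informations involved. By contrast, the reduction to monotonicity and the reallocation step are routine, relying only on concavity of $\theta_{U_i}$ and on the boundary value $\theta_{U_i}(0)=0$ established earlier.
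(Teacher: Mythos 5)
Your proposal is correct and follows essentially the paper's route: you prove the pointwise domination $\theta_{U_j}(R)\geq\theta_{U_i}(R)$ by composing with the Blackwell kernel and applying data processing, and then reallocate node $i$'s rate to node $j$ using $\theta_{U}(0)=0$, exactly as the paper does. The only cosmetic difference is that you reformulate $\theta_{U_i}$ over channels from the raw observation $X^{(i)}$ so that the kernel $w$ (defined on observations) can be applied directly, whereas the paper works at the LLR level with an implicitly lifted kernel $w_{U_i|U_j}$; this extra care in justifying the sufficiency step is reasonable but does not constitute a different argument.
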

	\begin{proof}
		Recall the definition of $\theta_{U_i}(R_i)$:
		\begin{equation}\label{eq:dominate}
		\begin{split}
		\theta_{U_j}(R) &= \sup_{\substack{p_{V|U_j} \\ \text{s.t.}\  I(U_j; V)\leq R}} \quad E_P[V]- \log E_P[\exp(V-U_j)]\\
		&\stackrel{(a)}{=} \sup_{\substack{p_{V|U_j} \\ \text{s.t.}\  I(U_j; V)\leq R}} \quad E_P[V]- \log E_Q[\exp(V)]\\
		&\geq \sup_{\substack{p_{V|U_i}\circ\, w_{U_i|U_j} \\ \text{s.t.}\  I(U_j; V)\leq R}}\quad E_P[V]- \log E_Q[\exp(V)]\\
		&\stackrel{(b)}{\geq} \sup_{\substack{p_{V|U_i}\circ\, w_{U_i|U_j} \\ \text{s.t.}\  I(U_i; V)\leq R}} \quad E_P[V]- \log E_Q[\exp(V)]\\
		&= \theta_{U_i}(R)
		\end{split}
		\end{equation}
		where $(a)$ is due to the measure change as $U_j$ is distributed as the logarithm of the Radon--Nikodym derivative $\frac{dP^{(j)}}{dQ^{(j)}}$ and $(b)$ is due to the data processing inequality $I(U_j;V) \leq I(U_i;V)$ for the choice of $p_{V|U_j} = p_{V|U_i}\circ w_{U_i|U_j}$. Suppose $R_i > 0$ and $R_j = 0$. Since $\theta_{U_j}(R)$ pointwise dominates $\theta_{U_i}(R)$, one cannot do worse with the modification $(R_i,0) \to (0,R_i)$.
	\end{proof}
	
	It might be tempting to think that the optimal allocation assigns $R_i^* = 0$ if $i \prec j $ for some $j$, as $\theta_{U_j}(R)$ dominates $\theta_{U_i}(R)$ pointwise. This is not true in general. Suppose $P^{(i)}$ and $Q^{(i)}$ are obtained by passing $P^{(j)}$ and $Q^{(j)}$ through an additive Gaussian channel with almost zero noise and suppose $R_{\text{sum}}$ is very large. Allocating all the rate to node $j$ will yield an exponent close to $D(P^{(i)}||Q^{(i)})$ whereas an equal rate allocation gives an exponent close to $D(P^{(i)}||Q^{(i)}) + D(P^{(j)}||Q^{(j)})$. Therefore, if the sum-rate constraint is large enough, it is preferred to observe two (almost uncompressed) independent samples instead of one, which surely increases the type-II decay rate.
	
	Since $\theta_{U_i}$'s (or equivalently $\delta_{U_i}$'s) are difficult to calculate in general, one may consider the optimal allocation based on the $(R, \delta)$ pairs that lie on the $\uR_{U_i}(\delta)$ curves given by the parametric form \eqref{eqn:lower_gap}. 
	Observe that such pairs depend on $U_i$ only through shifts of a parametric curve by its differential entropy $h_i := h(U_i)$. Denoting the inverse of $\uR_{U_i}(\delta)$ by $\underline{\delta}_{U_i}(R)$, we have thus the property 
	\begin{equation}
	\underline{\delta}_{U_i}(R) = \underline{\delta}(R - h_i) 
	\end{equation}
	where $\underline{\delta}(R)$ is given by the parametric form
	\begin{equation}\label{eq:parametric_theta}
	\begin{split}
	R(\alpha) &= - \log \Gamma(\alpha) + \alpha \psi(\alpha) - \alpha,\\
	\underline{\delta}(\alpha) &=  \log \alpha-\psi(\alpha),\quad \alpha > 0.
	\end{split}
	\end{equation}
	The sum-rate optimization is then formulated as 
	\begin{equation}\label{eq:waterfill}
	\begin{split}
	\underline{\theta}^*(R_{\text{sum}}) &:= \max_{\sum_{i=1}^m R_i = R_{\text{sum}}} \sum_{i=1}^m \big[D(P^{(i)}||Q^{(i)})-\underline{\delta}(R_i - h_i)\big]\\
	&=\sum_{i=1}^m D(P^{(i)}||Q^{(i)}) - \min_{\sum_{i=1}^m R_i = R_{\text{sum}}} \sum_{i=1}^m\underline{\delta}(R_i - h_i),
	\end{split}
	\end{equation}
	which can be shown to admit a water-filling solution.
	\begin{lemma}\label{lem:waterfill} The sum-rate constrained problem \eqref{eq:waterfill} has a solution given by 
		\begin{equation}\label{eq:waterfill_soln}
		R_i^* = (\mu + h_i)^+,
		\end{equation}
		where $(x)^+$ denotes the positive part of $x$; and $\mu$ is a constant chosen to satisfy the sum-rate constraint $\sum_{i=1}^m R^*_i = R_{\text{sum}}$.
	\end{lemma}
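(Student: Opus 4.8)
\emph{Proof plan.} The strategy is to read \eqref{eq:waterfill} as a convex, separable minimization over the simplex $\{R_i\ge 0,\ \sum_i R_i=R_{\mathrm{sum}}\}$, write down its KKT conditions, and exploit the special structure of the curve $\underline{\delta}$ to turn them into a (reverse) water-filling. The single structural fact we need is the following.

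\textbf{Claim.} The function $\underline{\delta}$ defined parametrically in \eqref{eq:parametric_theta} is a strictly decreasing, convex function on all of $\mathbb{R}$, with
\begin{equation}
\underline{\delta}'(R) = -\frac{1}{\alpha(R)},
\end{equation}
where $\alpha(\cdot)$ is the inverse of the strictly increasing map $R(\alpha)=-\log\Gamma(\alpha)+\alpha\psi(\alpha)-\alpha$, $\alpha>0$. To prove it, I differentiate the parametric representation: $\frac{dR}{d\alpha}=\alpha\psi'(\alpha)-1$ and $\frac{d\underline{\delta}}{d\alpha}=\frac{1}{\alpha}-\psi'(\alpha)=-\frac{1}{\alpha}\bigl(\alpha\psi'(\alpha)-1\bigr)$, so $\underline{\delta}'(R)=-1/\alpha(R)$. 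The elementary inequality $\alpha\psi'(\alpha)>1$ for all $\alpha>0$ — e.g.\ from $\psi'(\alpha)>\frac1\alpha+\frac1{2\alpha^2}$, or directly from $\psi'(\alpha)=\sum_{n\ge0}(\alpha+n)^{-2}$ with an integral comparison — shows $R(\cdot)$ is strictly increasing, hence invertible with strictly increasing inverse $\alpha(\cdot)$; therefore $\underline{\delta}'(R)=-1/\alpha(R)$ is increasing, i.e.\ $\underline{\delta}$ is convex, and $\underline{\delta}'<0$ gives strict monotonicity. The asymptotics $R(\alpha)\to-\infty$ as $\alpha\to0^+$ and $R(\alpha)\to+\infty$ as $\alpha\to\infty$ (from $\Gamma(\alpha)\sim1/\alpha$, $\psi(\alpha)\sim-1/\alpha$, and Stirling, respectively) show the domain of $\underline{\delta}$ is indeed all of $\mathbb{R}$, so $\underline{\delta}_{U_i}(R)=\underline{\delta}(R-h_i)$ is well defined for every $R$ and every $h_i$.

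\emph{Water-filling.} Granting the Claim, the objective $\sum_i\underline{\delta}(R_i-h_i)$ in \eqref{eq:waterfill} is convex and continuous on the compact simplex, so a minimizer exists and the KKT conditions are necessary and sufficient. Attach a multiplier $\nu$ to $\sum_i R_i=R_{\mathrm{sum}}$ and $\mu_i\ge 0$ to $R_i\ge 0$; stationarity reads $\underline{\delta}'(R_i-h_i)+\nu-\mu_i=0$, i.e.\ $\mu_i=\nu-1/\alpha_i$ with $\alpha_i:=\alpha(R_i-h_i)$, and since $1/\alpha_i>0$ feasibility of $\mu_i\ge0$ forces $\nu>0$. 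For a node with $R_i^*>0$, complementary slackness gives $\mu_i=0$, hence $\alpha_i=1/\nu$, and therefore $R_i^*-h_i=R(1/\nu)=:\mu$, i.e.\ $R_i^*=\mu+h_i$ (which is consistent with $R_i^*>0$ precisely when $\mu+h_i>0$). For a node with $R_i^*=0$ the requirement is $\mu_i=\nu-1/\alpha(-h_i)\ge 0$, i.e.\ $\alpha(-h_i)\ge 1/\nu$, which by monotonicity of $R(\cdot)$ is exactly $-h_i\ge R(1/\nu)=\mu$, i.e.\ $\mu+h_i\le 0$. The two cases combine into $R_i^*=(\mu+h_i)^+$, matching \eqref{eq:waterfill_soln}.

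\emph{Choosing $\mu$ and closing the argument.} It remains to fix $\mu$ by the sum-rate constraint: $\mu\mapsto\sum_i(\mu+h_i)^+$ is continuous, non-decreasing, equals $0$ for $\mu\le-\max_i h_i$, and tends to $\infty$ as $\mu\to\infty$, so there is a finite $\mu$ with $\sum_i(\mu+h_i)^+=R_{\mathrm{sum}}$; setting $\nu:=1/\alpha(\mu)>0$ and $\mu_i:=(\nu-1/\alpha(-h_i))^+$ one checks that $(R_i^*,\nu,\mu_i)$ satisfies primal feasibility, dual feasibility, complementary slackness, and stationarity, hence is optimal. The only genuine work is the Claim — and within it, the convexity of $\underline{\delta}$, which hinges on the slightly fiddly special-function inequality $\alpha\psi'(\alpha)>1$; once that is in hand, everything reduces to the textbook reverse water-filling computation.
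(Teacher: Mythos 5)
Your proof is correct and follows the same KKT/water-filling skeleton as the paper's, but you replace the paper's one-line assertion that $\underline{\delta}$ is convex (``a consequence of its formulation in \eqref{eq:max_ent}'', which tacitly requires concavity of the maximum-entropy value in $\Delta$ followed by inverting a convex decreasing function) with a direct and self-contained computation. Differentiating the parametric form \eqref{eq:parametric_theta} gives $dR/d\alpha = \alpha\psi'(\alpha)-1$ and $d\underline{\delta}/d\alpha = -\tfrac{1}{\alpha}\bigl(\alpha\psi'(\alpha)-1\bigr)$, so $\underline{\delta}'(R) = -1/\alpha(R)$; the strict inequality $\alpha\psi'(\alpha)>1$ (which follows, e.g., from $\psi'(\alpha)=\sum_{n\ge 0}(\alpha+n)^{-2} > \int_0^\infty (\alpha+x)^{-2}\,dx = 1/\alpha$) then shows that $R(\cdot)$ is strictly increasing and that $\underline{\delta}'$ is negative and increasing, hence $\underline{\delta}$ is strictly decreasing and convex on all of $\mathbb{R}$. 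You also spell out two small points the paper leaves implicit: the existence of the water level via continuity and monotonicity of $\mu\mapsto\sum_i(\mu+h_i)^+$, and the explicit dual certificate $(\nu,\mu_i)$. Apart from that the two arguments coincide: both invoke KKT conditions for the convex separable minimization over the simplex, use monotonicity of $\underline{\delta}'$ to rewrite stationarity as the threshold rule $R_i^*=(\mu+h_i)^+$, and identify the water level as $\mu = R(1/\lambda)$ (your $\nu$ corresponds to the paper's $\lambda$).
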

	\begin{proof}
%		We first show that $\underline{\delta}(R)$ is convex. Using the parametric form in \eqref{eq:parametric_theta}, we have 
%		\begin{equation}\label{eq:first_derivative}
%		\begin{split}
%		\frac{\frac{\partial \underline{\delta}}{\partial \alpha}}{\frac{\partial R}{\partial \alpha}} = \frac{\frac 1 \alpha - \psi'(\alpha)}{\alpha \psi'(\alpha)-1} = -\frac 1 \alpha.
%		\end{split}
%		\end{equation}
%		
%		Observe $\frac{\partial R}{\partial \alpha} = \alpha \psi'(\alpha)-1 \geq 0$ --- one way to see this is to use the series representation \cite[5.15]{NIST:DLMF} to obtain
%		\begin{equation}
%		\psi'(\alpha) = \sum_{k=0}^\infty \frac 1 {(\alpha + k)^2} \geq \int_{\alpha}^{\infty} \frac {dt} {t^2} = \frac 1 \alpha.
%		\end{equation}
%		Therefore, $R(\alpha)$ is non-decreasing. Consequently $\alpha(R)$ is non-decreasing and thus $-1/\alpha(R)$ is non-decreasing in $R$ as well. Together with \eqref{eq:first_derivative}, this implies the derivative $\underline{\delta}'(R)$ is non-decreasing in $R$, hence $\underline{\delta}(R)$ is convex.
		First, observe that $\underline{\delta}(R)$ is convex. This is a consequence of its formulation in \eqref{eq:max_ent}.
		As $\underline{\delta}(R)$ is convex, the Karush--Kuhn--Tucker (KKT) conditions are necessary and sufficient to characterize the solutions. For the sum-constraint $\sum_{i=1}^m R_i = R_{\text{sum}}$, it is known that the KKT conditions are given by \cite{Gallager}
		\begin{equation}
		\begin{split}
		\underline{\delta}'(R_i-h_i) = \lambda ,\quad R_i > 0\\
		\underline{\delta}'(-h_i) > \lambda ,\quad R_i = 0
		\end{split}
		\end{equation}
		for some constant $\lambda$. Since $\underline{\delta}'(R_i-h_i)$ is non-decreasing due to the convexity of $\underline{\delta}(R)$, the KKT conditions are also equivalent to
		\begin{equation}
		\begin{split}
		R_i-h_i = \mu ,\quad R_i > 0\\
		-h_i > \mu ,\quad R_i = 0
		\end{split}
		\end{equation}
		with $\mu = R(\alpha)|_{\alpha = 1/\lambda}$, and they characterize the claimed solution in \eqref{eq:waterfill_soln}.
	\end{proof}
	
	Although the sum-rate optimization in \eqref{eq:waterfill} does not give the exact boundary of the achievable $(R_1,\dots,R_m,\theta)$ pairs,  $\underline{\theta}^*(R_{\text{sum}})$ is an upper bound to the boundary. This implies that no $(R_1,\dots,R_m,\theta)$ pair lying above $\underline{\theta}^*(R_{\text{sum}})$ is achievable. A numerical example is illustrated in Figure \ref{fig:waterfilling}.

	\begin{figure}[h!]\label{fig:waterfilling}\centering
		\includegraphics[scale=0.65]{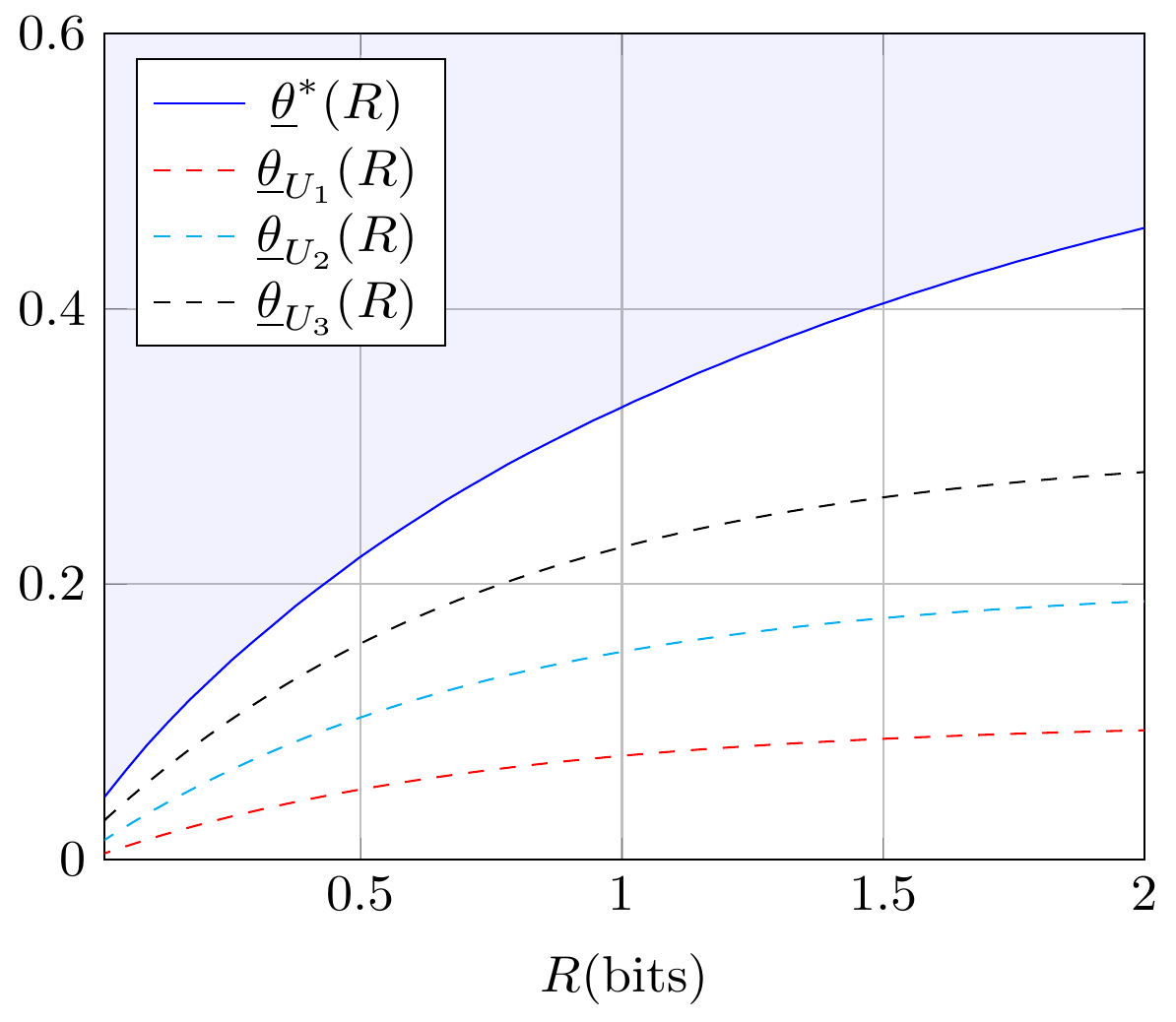}
		\caption{A 3-node instance of the problem. Nodes observe zero-mean Gaussian data under $\cH_0$, and with a mean vector $[\sqrt{0.2}, \sqrt{0.4},\sqrt{0.6}]$ under $\cH_1$. The data has unit variance under both hypotheses. Then, the LLRs also have Gaussian distributions with means $[0.1,0.2,0.3]$ and variances $[0.2,0.4,0.6]$ under $\cH_0$. The individual $\underline\theta_{U_i}(R) := D(P^{(i)}||Q^{(i)}) - \underline\delta_{U_i}(R)$ curves are drawn dashed, whereas the optimal sum-rate curve $\underline\theta^*(R)$ is drawn solid. The shaded region is unachievable under sum-rate constraints.}
	\end{figure}
	
	We conclude this section by noting that when the data is not independent across nodes, the question of how to combine the scores is highly non-trivial even if there were no communication constraints. Under communication constraints, the problem for this general case could be of formidable complexity.
	
	\section{Discussion}
	
	In this work, we have studied a fundamental limit of a distributed hypothesis testing problem when remote nodes compress their data in a memoryless fashion and the expected number of bits sent under $\cH_0$ should be kept limited to a prescribed quantity $R$. This asymmetric communication constraint is in line with the view that $\cH_1$ is a rare high-risk event and must be detected with high probability. Thus, nodes are allowed to send a large number of bits under $\cH_1$. With such a communication constraint, we characterized the maximum attainable type-II error (i.e., mis-detection of $\cH_1$) exponent (Theorem \ref{thm:theta}) for vanishing type-I error probability and derived a closed-form upper bound to this error exponent (Lemma \ref{lem:parametric}).
	
	 In the high-rate regime, we show that the upper bound is approached with simple scalar lattice quantization within $\frac 1 2\log_2(\pi e/2) \approx 1.047$ bits. This gap is due to the covering inefficiency of the 1-dimensional lattice and due to the fact that the gap to the optimal error rate $D(P||Q)$ behaves quadratically at high rates. Therefore, it is expected that the results for the high-rate regime coincide with the results on the rate-distortion problem for mean-square distortion. This is the reason that the asymptotic behavior of the rate-gap curve (Theorem \ref{thm:asymp_rate}) is reminiscent of the Shannon lower bound for the rate-distortion curve under mean-square distortion. It is also because of this quadratic behavior that one can approach the lower bound within $\frac 1 2\log_2(\pi e/2)$ bits under scalar lattice quantization.
	
	We have also obtained a simple upper bound for the vector quantization case that can be expressed in terms of its scalar quantization analog \eqref{eqn:multi_curve}--\eqref{eqn:multi_curve2}. Hence, the same upper and lower bounds for the scalar case are also valid for the vector quantization case. We have also shown in \eqref{eq:ratetozero} that as the dimension tends to infinity, the rate-gap curve is identically equal to zero for $\delta > 0$. This is consistent with the following simple achievability scheme: The remote node performs its own Neyman--Pearson test and sends its decision with one bit. Hence, the fusion center is informed of the optimal decision, and the average number of bits sent is arbitrarily small.
	
	With an independence assumption across nodes, the results for the single-node problem can be easily extended to the multiple-node problem, and a simple upper bound to the optimal type-II error exponent can be obtained in terms of the sum of individual upper bounds for each remote node. We formulated a sum-rate constrained problem and studied some of its properties (Theorem \ref{thm:more_informative} and Lemma \ref{lem:waterfill}).
	
	As a final remark, we note that the results for the vector quantization case are also applicable to the multiple-node case. This implies that when the dimension tends to infinity, the rate-gap curve will be again equal to zero and is attained with a simple scheme that is similar to the single-node case: Each node performs its optimal test and sends the 1-bit result to the fusion center. The center decides $\cH_1$ if at least one node decides $\cH_1$. This scheme ensures vanishing type-I error probability and the type-II error exponent is equal to $\sum_i D(P^{(i)}||Q^{(i)})$. Hence the center is able to attain the optimal rate with an arbitrarily low amount of communication. However, this scheme allows each node to dictate a $\cH_1$ decision to the center. This results in a system that is vulnerable to manipulation.  By contrast, schemes with scalar quantization, or with low-dimensional vector quantization, give the center the opportunity to detect errors or manipulations and therefore these schemes could be of interest when faulty or malicious nodes are present.
	
	\bibliographystyle{IEEEtran}
	\bibliography{ref}

% Generated by IEEEtran.bst, version: 1.14 (2015/08/26)
\begin{thebibliography}{10}
\providecommand{\url}[1]{#1}
\csname url@samestyle\endcsname
\providecommand{\newblock}{\relax}
\providecommand{\bibinfo}[2]{#2}
\providecommand{\BIBentrySTDinterwordspacing}{\spaceskip=0pt\relax}
\providecommand{\BIBentryALTinterwordstretchfactor}{4}
\providecommand{\BIBentryALTinterwordspacing}{\spaceskip=\fontdimen2\font plus
\BIBentryALTinterwordstretchfactor\fontdimen3\font minus
  \fontdimen4\font\relax}
\providecommand{\BIBforeignlanguage}[2]{{%
\expandafter\ifx\csname l@#1\endcsname\relax
\typeout{** WARNING: IEEEtran.bst: No hyphenation pattern has been}%
\typeout{** loaded for the language `#1'. Using the pattern for}%
\typeout{** the default language instead.}%
\else
\language=\csname l@#1\endcsname
\fi
#2}}
\providecommand{\BIBdecl}{\relax}
\BIBdecl

\bibitem{ConfVersionInan}
Y.~Inan, M.~Kayaalp, A.~H. Sayed, and E.~Telatar, ``A fundamental limit of
  distributed hypothesis testing under memoryless quantization,'' in \emph{Proc.
  IEEE International Conference on Communications (ICC)}, pp. 1--5, Seoul, May 2022.

\bibitem{IeeeStd}
``{IEEE} standard for information technology--part 11: Wireless {LAN} medium
  access control ({MAC}) and physical layer ({PHY}) specifications,''
  \emph{IEEE Std 802.11-2012}, pp. 1--2793, 2012.

\bibitem{Ahlswede}
R.~Ahlswede and I.~Csiszár, ``Hypothesis testing with communication
  constraints,'' \emph{IEEE Transactions on Information Theory}, vol.~32,
  no.~4, pp. 533--542, 1986.

\bibitem{Han_1}
T.~Han, ``Hypothesis testing with multiterminal data compression,'' \emph{IEEE
  Transactions on Information Theory}, vol.~33, no.~6, pp. 759--772, 1987.

\bibitem{shalaby}
H.~Shalaby and A.~Papamarcou, ``Multiterminal detection with zero-rate data
  compression,'' \emph{IEEE Transactions on Information Theory}, vol.~38,
  no.~2, pp. 254--267, 1992.

\bibitem{shimokawa1994}
H.~Shimokawa, T.~S. Han, and S.~Amari, ``Error bound of hypothesis testing with
  data compression,'' in \emph{Proc. IEEE International Symposium on
  Information Theory}, 1994, pp. 114--.

\bibitem{Han_2}
T.~S. Han and S.~Amari, ``Statistical inference under multiterminal data
  compression,'' \emph{IEEE Transactions on Information Theory}, vol.~44,
  no.~6, pp. 2300--2324, 1998.

\bibitem{Han_exp}
T.~Han and K.~Kobayashi, ``Exponential-type error probabilities for
  multiterminal hypothesis testing,'' \emph{IEEE Transactions on Information
  Theory}, vol.~35, no.~1, pp. 2--14, 1989.

\bibitem{weinberger2019}
N.~Weinberger and Y.~Kochman, ``On the reliability function of distributed
  hypothesis testing under optimal detection,'' \emph{IEEE Transactions on
  Information Theory}, vol.~65, no.~8, pp. 4940--4965, 2019.

\bibitem{watanabe2017}
S.~Watanabe, ``{Neyman-Pearson} test for zero-rate multiterminal hypothesis
  testing,'' in \emph{Proc. IEEE International Symposium on Information Theory
  (ISIT)}, 2017, pp. 116--120.

\bibitem{tuncel}
E.~Tuncel, ``On error exponents in hypothesis testing,'' \emph{IEEE
  Transactions on Information Theory}, vol.~51, no.~8, pp. 2945--2950, 2005.

\bibitem{sreekumar2020}
S.~Sreekumar and D.~Gündüz, ``Distributed hypothesis testing over discrete
  memoryless channels,'' \emph{IEEE Transactions on Information Theory},
  vol.~66, no.~4, pp. 2044--2066, 2020.

\bibitem{broadcast_dependence}
M.~Wigger and R.~Timo, ``Testing against independence with multiple decision
  centers,'' in \emph{Proc. International Conference on Signal Processing and
  Communications (SPCOM)}, 2016, pp. 1--5.

\bibitem{broadcast}
\BIBentryALTinterwordspacing
S.~Salehkalaibar and M.~Wigger, ``Distributed hypothesis testing over noisy
  broadcast channels,'' \emph{Information}, vol.~12, no.~7, 2021. [Online].
  Available: \url{https://www.mdpi.com/2078-2489/12/7/268}
\BIBentrySTDinterwordspacing

\bibitem{katz}
\BIBentryALTinterwordspacing
G.~Katz, P.~Piantanida, and M.~Debbah, ``{Collaborative distributed hypothesis
  testing},'' Jan. 2016, IEEE Trans. on Information Theory (submitted).
  [Online]. Available: \url{https://hal.archives-ouvertes.fr/hal-01436767}
\BIBentrySTDinterwordspacing

\bibitem{interactive}
Y.~Xiang and Y.-H. Kim, ``Interactive hypothesis testing against
  independence,'' in \emph{Proc. IEEE International Symposium on Information
  Theory}, 2013, pp. 2840--2844.

\bibitem{escamilla}
P.~Escamilla, M.~Wigger, and A.~Zaidi, ``Distributed hypothesis testing:
  Cooperation and concurrent detection,'' \emph{IEEE Transactions on
  Information Theory}, vol.~66, no.~12, pp. 7550--7564, 2020.

\bibitem{mhanna}
M.~Mhanna and P.~Piantanida, ``On secure distributed hypothesis testing,'' in
  \emph{Proc. IEEE International Symposium on Information Theory (ISIT)}, 2015,
  pp. 1605--1609.

\bibitem{gunduz_privacy}
S.~Sreekumar, D.~Gündüz, and A.~Cohen, ``Distributed hypothesis testing under
  privacy constraints,'' in \emph{Proc. IEEE Information Theory Workshop (ITW)},
  2018, pp. 1--5.

\bibitem{tan_privacy}
\BIBentryALTinterwordspacing
A.~Gilani, S.~Belhadj~Amor, S.~Salehkalaibar, and V.~Y.~F. Tan, ``Distributed
  hypothesis testing with privacy constraints,'' \emph{Entropy}, vol.~21,
  no.~5, 2019. [Online]. Available:
  \url{https://www.mdpi.com/1099-4300/21/5/478}
\BIBentrySTDinterwordspacing

\bibitem{liao2018}
J.~Liao, L.~Sankar, V.~Y.~F. Tan, and F.~du~Pin~Calmon, ``Hypothesis testing
  under mutual information privacy constraints in the high privacy regime,''
  \emph{IEEE Transactions on Information Forensics and Security}, vol.~13,
  no.~4, pp. 1058--1071, 2018.

\bibitem{abbasalipour2022privacy}
R.~Abbasalipour and M.~Mirmohseni, ``Privacy-aware distributed hypothesis
  testing in {Gray-Wyner} network with side information,'' \emph{arXiv preprint
  arXiv:2202.02307}, 2022.

\bibitem{liao2017privacy}
J.~Liao, L.~Sankar, F.~P. Calmon, and V.~Y.~F. Tan, ``Hypothesis testing under
  maximal leakage privacy constraints,'' in \emph{Proc. IEEE International
  Symposium on Information Theory (ISIT)}, 2017, pp. 779--783.

\bibitem{debbah}
G.~Katz, P.~Piantanida, and M.~Debbah, ``Distributed binary detection with
  lossy data compression,'' \emph{IEEE Transactions on Information Theory},
  vol.~63, no.~8, pp. 5207--5227, 2017.

\bibitem{rahman}
M.~S. Rahman and A.~B. Wagner, ``On the optimality of binning for distributed
  hypothesis testing,'' \emph{IEEE Transactions on Information Theory},
  vol.~58, no.~10, pp. 6282--6303, 2012.

\bibitem{watanabe2022}
\BIBentryALTinterwordspacing
S.~Watanabe, ``On sub-optimality of random binning for distributed hypothesis
  testing,'' 2022. [Online]. Available: \url{https://arxiv.org/abs/2201.13005}
\BIBentrySTDinterwordspacing

\bibitem{Wigger_1}
S.~Salehkalaibar and M.~Wigger, ``Distributed hypothesis testing with
  variable-length coding,'' in \emph{Proc. International Symposium on Modeling
  and Optimization in Mobile, Ad Hoc, and Wireless Networks (WiOPT)}, 2020, pp.
  1--5.

\bibitem{Wigger_2}
M.~Hamad, M.~Wigger, and M.~Sarkiss, ``Cooperative multi-sensor detection under
  variable-length coding,'' in \emph{Proc. IEEE Information Theory Workshop
  (ITW)}, 2021, pp. 1--5.

\bibitem{hamad:hal-03349810}
\BIBentryALTinterwordspacing
------, ``{Two-hop network with multiple decision centers under expected-rate
  constraints},'' in \emph{{Proc. IEEE Global Communications
  Conference}},\hskip 1em plus 0.5em minus 0.4em\relax Madrid, Spain: {IEEE},
  Dec. 2021, pp. 1--6. [Online]. Available:
  \url{https://hal.archives-ouvertes.fr/hal-03349810}
\BIBentrySTDinterwordspacing

\bibitem{hamad2021}
------, ``Optimal exponents in cascaded hypothesis testing under expected rate
  constraints,'' in \emph{Proc. IEEE Information Theory Workshop (ITW)}, 2021,
  pp. 1--6.

\bibitem{hamad:hal-03349654}
\BIBentryALTinterwordspacing
------, ``{Cooperative multi-sensor detection under variable-length coding},''
  in \emph{{ITW 2020: IEEE Information Theory Workshop}}, ser. 2020 IEEE
  Information Theory Workshop (ITW).\hskip 1em plus 0.5em minus 0.4em\relax
  Riva del Garda, Italy: {IEEE}, Apr. 2021, pp. 1--5. [Online]. Available:
  \url{https://hal.archives-ouvertes.fr/hal-03349654}
\BIBentrySTDinterwordspacing

\bibitem{benefits}
\BIBentryALTinterwordspacing
M.~Hamad, M.~Sarkiss, and M.~Wigger, ``Benefits of rate-sharing for distributed
  hypothesis testing,'' 2022. [Online]. Available:
  \url{https://arxiv.org/abs/2202.02282}
\BIBentrySTDinterwordspacing

\bibitem{Chernoff}
\BIBentryALTinterwordspacing
H.~Chernoff, ``{A Measure of Asymptotic Efficiency for Tests of a Hypothesis
  Based on the sum of Observations},'' \emph{The Annals of Mathematical
  Statistics}, vol.~23, no.~4, pp. 493 -- 507, 1952. [Online]. Available:
  \url{https://doi.org/10.1214/aoms/1177729330}
\BIBentrySTDinterwordspacing

\bibitem{alisilvey}
S.~M. Ali and S.~D. Silvey, ``A general class of coefficients of divergence of
  one distribution from another,'' \emph{Journal of the Royal Statistical
  Society. Series B (Methodological)}, vol.~28, no.~1, pp. 131--142, 1966.

\bibitem{csiszar_shields}
\BIBentryALTinterwordspacing
I.~Csiszár and P.~Shields, ``Information theory and statistics: A tutorial,''
  \emph{Foundations and Trends® in Communications and Information Theory},
  vol.~1, no.~4, pp. 417--528, 2004. [Online]. Available:
  \url{http://dx.doi.org/10.1561/0100000004}
\BIBentrySTDinterwordspacing

\bibitem{kailath}
T.~Kailath, ``The divergence and Bhattacharyya distance measures in signal
  selection,'' \emph{IEEE Transactions on Communication Technology}, vol.~15,
  no.~1, pp. 52--60, 1967.

\bibitem{kassam}
S.~Kassam, ``Optimum quantization for signal detection,'' \emph{IEEE
  Transactions on Communications}, vol.~25, no.~5, pp. 479--484, 1977.

\bibitem{poor_thomas}
H.~Poor and J.~Thomas, ``Applications of Ali--Silvey distance measures in the
  design generalized quantizers for binary decision systems,'' \emph{IEEE
  Transactions on Communications}, vol.~25, no.~9, pp. 893--900, 1977.

\bibitem{Gray}
M.~Longo, T.~Lookabaugh, and R.~Gray, ``Quantization for decentralized
  hypothesis testing under communication constraints,'' \emph{IEEE Transactions
  on Information Theory}, vol.~36, no.~2, pp. 241--255, 1990.

\bibitem{poor_highrate}
H.~Poor, ``Fine quantization in signal detection and estimation,'' \emph{IEEE
  Transactions on Information Theory}, vol.~34, no.~5, pp. 960--972, 1988.

\bibitem{Hero}
R.~Gupta and A.~Hero, ``High-rate vector quantization for detection,''
  \emph{IEEE Transactions on Information Theory}, vol.~49, no.~8, pp.
  1951--1969, 2003.

\bibitem{villard2011}
J.~Villard and P.~Bianchi, ``High-rate vector quantization for the
  {Neyman–Pearson} detection of correlated processes,'' \emph{IEEE
  Transactions on Information Theory}, vol.~57, no.~8, pp. 5387--5409, 2011.

\bibitem{tsitsiklis_extremal}
J.~Tsitsiklis, ``Extremal properties of likelihood-ratio quantizers,''
  \emph{IEEE Transactions on Communications}, vol.~41, no.~4, pp. 550--558,
  1993.

\bibitem{varshney_privacy}
V.~S.~S. Nadendla and P.~K. Varshney, ``Design of binary quantizers for
  distributed detection under secrecy constraints,'' \emph{IEEE Transactions on
  Signal Processing}, vol.~64, no.~10, pp. 2636--2648, 2016.

\bibitem{mhanna2016}
M.~Mhanna, P.~Duhamel, and P.~Piantanida, ``Quantization for distributed binary
  detection under secrecy constraints,'' in \emph{Proc. IEEE International
  Conference on Communications (ICC)}, 2016, pp. 1--6.

\bibitem{savas}
V.~Saligrama, M.~Alanyali, and O.~Savas, ``Distributed detection in sensor
  networks with packet losses and finite capacity links,'' \emph{IEEE
  Transactions on Signal Processing}, vol.~54, no.~11, pp. 4118--4132, 2006.

\bibitem{chen}
B.~Liu and B.~Chen, ``Channel-optimized quantizers for decentralized detection
  in sensor networks,'' \emph{IEEE Transactions on Information Theory},
  vol.~52, no.~7, pp. 3349--3358, 2006.

\bibitem{viswanathan}
L.~Cao and R.~Viswanathan, ``Divergence-based soft decision for error resilient
  decentralized signal detection,'' \emph{IEEE Transactions on Signal
  Processing}, vol.~62, no.~19, pp. 5095--5106, 2014.

\bibitem{marano2009}
S.~Marano, V.~Matta, and L.~Tong, ``Distributed detection in the presence of
  byzantine attacks,'' \emph{IEEE Transactions on Signal Processing}, vol.~57,
  no.~1, pp. 16--29, 2009.

\bibitem{gul2021}
G.~Gül and M.~Baßler, ``Scalable multilevel quantization for distributed
  detection,'' in \emph{Proc. IEEE International Conference on Acoustics,
  Speech and Signal Processing (ICASSP)}, 2021, pp. 5200--5204.

\bibitem{Martingales}
D.~Williams, \emph{Probability with Martingales.}, ser. Cambridge mathematical
  textbooks.\hskip 1em plus 0.5em minus 0.4em\relax Cambridge University Press,
  1991.

\bibitem{Rudin}
W.~Rudin, \emph{Real and Complex Analysis, 3rd Ed.}\hskip 1em plus 0.5em minus
  0.4em\relax USA: McGraw-Hill, Inc., 1987.

\bibitem{Wyner}
\BIBentryALTinterwordspacing
A.~Wyner, ``An upper bound on the entropy series,'' \emph{Information and
  Control}, vol.~20, no.~2, pp. 176--181, 1972. [Online]. Available:
  \url{https://www.sciencedirect.com/science/article/pii/S0019995872903658}
\BIBentrySTDinterwordspacing

\bibitem{Alon}
N.~Alon and A.~Orlitsky, ``A lower bound on the expected length of one-to-one
  codes,'' \emph{IEEE Transactions on Information Theory}, vol.~40, no.~5, pp.
  1670--1672, 1994.

\bibitem{shannon}
C.~E. Shannon, \emph{Coding Theorems for a Discrete Source With a Fidelity
  CriterionInstitute of Radio Engineers, International Convention Record, vol.
  7, 1959.}, 1993, pp. 325--350.

\bibitem{Cover2006}
T.~M. Cover and J.~A. Thomas, \emph{Elements of Information Theory}.\hskip 1em
  plus 0.5em minus 0.4em\relax Wiley-Interscience, July 2006.

\bibitem{NIST:DLMF}
\BIBentryALTinterwordspacing
``{\it NIST Digital Library of Mathematical Functions},''
  http://dlmf.nist.gov/, Release 1.1.5 of 2022-03-15, f.~W.~J. Olver, A.~B.
  {Olde Daalhuis}, D.~W. Lozier, B.~I. Schneider, R.~F. Boisvert, C.~W. Clark,
  B.~R. Miller, B.~V. Saunders, H.~S. Cohl, and M.~A. McClain, eds. [Online].
  Available: \url{http://dlmf.nist.gov/}
\BIBentrySTDinterwordspacing

\bibitem{Conway}
J.~Conway and N.~Sloane, ``Fast quantizing and decoding and algorithms for
  lattice quantizers and codes,'' \emph{IEEE Transactions on Information
  Theory}, vol.~28, no.~2, pp. 227--232, 1982.

\bibitem{csiszar_quantization}
I.~Csiszár, ``Generalized entropy and quantization problems,'' \emph{Trans.
  Sixth Prague Conference on Inform. Theory etc., Academia, Prague}, pp.
  159--174, 1971.

\bibitem{Kostina}
V.~Kostina, ``Data compression with low distortion and finite blocklength,'' in
  \emph{Proc. Annual Allerton Conference on Communication, Control, and
  Computing (Allerton)}, 2015, pp. 1127--1134.

\bibitem{Gallager}
R.~G. Gallager, \emph{Information Theory and Reliable Communication}.\hskip 1em
  plus 0.5em minus 0.4em\relax USA: John Wiley \& Sons, Inc., 1968.

\bibitem{Blackwell}
D.~Blackwell, ``Comparison of experiments,'' \emph{Berkeley Symposium on
  Mathematical Statistics and Probability}, pp. 93--102, 1951.

\bibitem{rockafellar-1970a}
R.~T. Rockafellar, \emph{Convex analysis}, ser. Princeton Mathematical
  Series.\hskip 1em plus 0.5em minus 0.4em\relax Princeton, N. J.: Princeton
  University Press, 1970.

\bibitem{polyanskiy2014lecture}
Y.~Polyanskiy and Y.~Wu, ``Lecture notes on information theory,'' \emph{Lecture
  Notes for ECE563 (UIUC) and}, vol.~6, no. 2012-2016, p.~7, 2014.

\end{thebibliography}
	\appendix
	
	\subsection{Proof of Theorem 1}\label{app:theta} 
	%	We first relax the definition of achievable pairs by removing the condition (b) in Definition 1. After completing the proof of (ii); we will show that the supremal achievable pairs $(R, \theta)$ under the relaxed definition indeed satisfy the condition (b) in Definition 1.
	\subsubsection{Proof of (i)} Let $\breve\theta_1(R)$ be the concave envelope of $\theta_1(R)$. Recall $S_t = f_t(L_t)$ and since all the expectations are taken under $P$, we omit $P$ from the subscripts. Also recall \begin{equation*}\theta_t(R) = \sup_{\{f_1,\dots,f_t\} \in \cF_t(R)} \frac 1 t \sum_{\tau = 1}^t \bigg(E[S_\tau]-\log E[e^{S_\tau-L_\tau}]\bigg).
	\end{equation*}
	We first show that $\breve\theta_1(R) \geq \theta_t(R)$ for all $t$. Let us modify the definition of $\theta_1(R)$ as 
	\begin{equation}\label{eqn:theta_1}
	\begin{split}
	\theta_1(R) = \sup_{f\text{ simple}} &\quad E[S_1]\\
	\text{s.t.} &\quad E[e^{S_1-L_1}] = 1\\
	& \quad H(S_1) \leq R
	\end{split}
	\end{equation}
	since shifting $S_1$ does not change the entropy. The supremization is over simple functions. Similarly, $\theta_t(R)$ can be defined as
	\begin{equation}\label{eqn_theta_t}
	\begin{split}
	\theta_t(R) = \sup_{\substack{f_1,\dots,f_t\\ \text{simple}}} &\quad\frac 1 t \sum_{i = 1}^t  E[S_i]\\
	\text{s.t.} &\quad E[e^{S_i-L_i}] = 1, \forall i \leq t\\
	& \quad \frac 1 t \sum_{i = 1}^t H(S_i) \leq R.
	\end{split}
	\end{equation}
	For any $\{f_i\}$ in the feasible set of \eqref{eqn_theta_t}, there exists $\{R_i\}$ with $\frac 1 t \sum_{i = 1}^t R_i \leq R$ and $H(S_i) \leq R_i$ for all $i \leq t$; and consequently $\frac 1 t \sum E[S_i] \leq \frac 1 t \sum \theta_1(R_i) \leq \breve{\theta}_1(R)$. Thus, $\theta_t(R) \leq \breve \theta_1(R)$.

	It remains to prove the reversed inequality for $t\to \infty$, i.e., $\theta_t(R) \geq \breve\theta_1(R)-\epsilon$ for large enough $t$, given $\epsilon  >0$. Suppose $\theta_1(R)$ is attained in the limit of the sequence of simple functions $\{f_t^*\}$. This implies for all $\epsilon_1 > 0$, there exists a simple function $f$ that maps $L_1 \mapsto S_1$ such that $E[S_1] \geq \theta_1(R) - \epsilon_1$, $E[e^{S_1-L_1}] = 1$ and $H(S_1) \leq R$. Carathéodory's theorem \cite[Section 17]{rockafellar-1970a} ensures that every point on the concave envelope $\breve\theta_1(R)$ is achieved by a convex combination of at most two points on $\theta_1(R)$. This implies the existence of functions $f$, $\tilde f$, and $\lambda \in [0,1]$, such that $\lambda E[S_1] + (1-\lambda) E[\tilde S_1] \geq \breve\theta_1(R) - \epsilon_2$ for all $\epsilon_2 > 0$, and $\lambda H(S_1) + (1-\lambda)H(\tilde S_1) \leq R$. Assume $H(\tilde S_1) \leq R$ without loss of generality. Consider the sequence $\{f_t\}$ such that $f_i = f$ for $i \leq \lceil\lambda t\rceil$ and $f_i = \tilde f $ otherwise. Observe $\frac 1 t \sum H(S_i) \leq R$ and thus $\theta_t(R) \geq \frac 1 t \sum E[S_i] \geq \breve\theta_1(R) - 2\epsilon_2$ for $t$ large enough. The proof of part (i) is complete.
	
	%	When $t$ is large enough, the above convex combination can be approached with averaging of $t$ achievable points. Namely, there exists a sequence of functions $\{g_t\}\in \cF_t(R)$ such that $\lim_t \frac 1 t\sum_{i = 1}^t E[g_i(L_i)] = \lambda E[S_1] + (1-\lambda) E[\tilde S_1] \geq \breve\theta_1(R) - \epsilon_2$. Hence for large enough $t$, we have $\frac 1 t\sum_{i = 1}^t E[g_i(L_i)] = \lambda E[S_1] + (1-\lambda) E[\tilde S_1] -\epsilon_3 \geq \breve\theta_1(R) - \epsilon_3'$ for an $\epsilon_3  >0$. Take supremum of $g$'s over $\cF_t(R)$ to obtain $\theta_t(R) \geq \breve\theta_1(R) - \epsilon_3'$. Take $\liminf_t$ of both sides and take $\epsilon_3'\to 0$ to complete the proof.
	%	
	
	In the achievability proof of part (ii), we have to show that the supremizers of $\theta_t(R)$ have to drive $\alpha_t \to 0$. For completeness, we provide the proof here. Take $\{f_t\}$ as above and use Chebyshev's inequality to bound the type-I error probability under the threshold test with the threshold $\eta_t$ chosen as in \eqref{eqn:threshold}:
	\begin{equation}
	\alpha_t = P\bigg(\sum_{i = 1}^t f_i(L_i) < E_P[g_i(L_i)]-\epsilon\bigg) \leq \frac{\sum_{i=1}^t \text{Var}\big(f_i(L_i)\big)}{\epsilon^2 t^2}
	\end{equation}
	Recall that $f_i$ is defined to be equal either to $f$ or $\tilde f$. As $f$ and $\tilde f$ take finitely many values, the variances of $f(U)$, $\tilde f(U)$ are bounded for any $U$. Therefore,
	\begin{equation}
	\alpha_t \leq \frac{\max\Big\{\text{Var}\big(f(L_1)\big),\text{Var}\big(\tilde f(L_1)\big)\Big\}}{\epsilon^2 t} \to 0,
	\end{equation}
	which shows that any sequence in the achievability part of (i) indeed satisfies the property (b) in Definition 1.
	
	\subsubsection{Proof of (ii)} (Achievability)
	As mentioned, choose $\eta_t = \frac 1 t \sum_{\tau = 1}^t E_P[S_t] -\epsilon$ as in \eqref{eqn:threshold}. We upper bound the type-II error $\beta_t$ as 
	\begin{equation}
	\begin{split}
	Q\big(\bar S_t \geq \eta_t\big) &= Q\big(\exp(t\bar S_t) \geq \exp\big(t\eta_t\big)\big)\\
	&\stackrel{(a)}{\leq} E_Q\bigg[\exp\Big(\sum_{\tau = 1}^t \big(S_\tau-E_P[S_\tau]+ \epsilon\big)\Big)\bigg]\\
	& \stackrel{(b)}{=} \prod_{\tau=1}^tE_Q[e^{S_\tau}]\exp(-E_P[S_\tau]+\epsilon)\\
	\end{split}
	\end{equation}
	where $(a)$ follows from Markov inequality and $(b)$ follows from independent processing of LLRs.
	Therefore,
	\begin{equation}
	\frac 1 t \log \frac 1 {\beta_t} \geq \frac 1 t \sum_{\tau = 1}^t \big(E_P[S_\tau]-\log E_Q[e^{S_\tau}]\big) -\epsilon.
	\end{equation}
	Optimizing the right-hand side with respect to the choice of $f_t$'s satisfying the communication constraints we have
	\begin{equation}\label{eqn:lower_bnd_donsker}
	\frac 1 t \log \frac 1 {\beta_t} \geq \sup_{\{f_1,\dots,f_t\} \in \cF_t(R)} \frac 1 t \sum_{\tau = 1}^t E_P[S_\tau]-\log E_Q[e^{S_\tau}] -\epsilon.
	\end{equation}
	Consider the transformation $\tilde S_t = \log \frac{P(S_t)}{Q(S_t)}$, i.e., the LLR of $S_t$. Observe that the mapping $L_t \mapsto \tilde S_t$ is a simple function and since $S_t$ is discrete, $H(S_t) \geq H(\tilde S_t)$ as the mapping $S_t \mapsto \tilde S_t$ is deterministic. Therefore, communication constraints are still satisfied. Furthermore, $\tilde S_t$ is a sufficient statistic and the fusion center is able to deploy a Neyman--Pearson test based on $\tilde S_t$'s.
	It is known from Donsker--Varadhan representation \cite{polyanskiy2014lecture} of divergence that 
	\begin{equation}\label{eqn:Donsker_varadhan}
	D(P||Q) = \sup_{g:\mathbb{R} \to \mathbb{R}} E_P[g(X)] - \log E_Q[e^{g(X)}]
	\end{equation}
	where the supremum is over the set of bounded measurable functions on $\mathbb{R}$, and is attained at $g(X) = \log\frac{dP}{dQ}$, the logarithm of the Radon--Nikodym derivative of $P$ with respect to $Q$. We combine \eqref{eqn:Donsker_varadhan} with \eqref{eqn:lower_bnd_donsker} to obtain
	\begin{equation}\label{eqn: lower_bnd}
	\frac 1 t \log \frac 1 {\beta_t} \geq  \sup_{\{f_1,\dots,f_t\} \in \cF_t(R)}  \frac 1 t \sum_{\tau = 1}^t  D(P_{f_\tau}||Q_{f_\tau})- \epsilon,
	\end{equation}
	where $P_{f_\tau} := P \circ f_\tau$ and $Q_{f_\tau} := Q \circ f_\tau$. It only remains to show that the supremizers in \eqref{eqn: lower_bnd} must drive $\alpha_t \to 0$, which we have already proved at the end of part (i).\\
	
	(Converse)
	Now, following similar steps to Stein's lemma, we apply data processing inequality twice to see that for any sequence of $f_t$'s:
	\begin{equation}
	tD(P||Q) \geq \sum_{\tau = 1}^t D(P_{f_\tau}||Q_{f_\tau}) \geq d(\alpha_t || 1-\beta_t)
	\end{equation}
	where $d(p||q) := p\log \frac{p}{q} + \bar p\log \frac{\bar p}{\bar q}$ is the binary divergence. Hence,
	\begin{equation}
	\sum_{\tau = 1}^t D(P_{f_\tau}||Q_{f_\tau}) \geq -h_e(\alpha_t) - \alpha_t \log(1-\beta_t) - (1-\alpha_t)\log(\beta_t),
	\end{equation}
	with $h_e(p) := - p\log p - (1-p)\log(1-p)$.
	Suppose $\alpha_t \to 0$ and $\beta_t$ bounded away from 1. Then, it must be true that
	\begin{equation}\label{eqn: upper_bnd}
	\liminf_{t \to \infty}  \sup_{\{f_1,\dots,f_t\} \in \cF_t(R)} \frac 1 t \sum_{\tau = 1}^t D(P_{f_\tau}||Q_{f_\tau}) \geq \liminf_{t \to \infty}\frac 1 t \log \frac 1 {\beta_t}.
	\end{equation}
	Taking $\liminf$ and $\epsilon \to 0$ in \eqref{eqn: lower_bnd}, combining with \eqref{eqn: upper_bnd}; we therefore have 
	\begin{equation}
	\liminf_{t \to \infty} \sup_{\{f_1,\dots,f_t\} \in \cF_t(R)}  \frac 1 t \sum_{\tau = 1}^t \sup_{f_\tau} D(P_{f_\tau}||Q_{f_\tau}) = \liminf_{t \to \infty} \frac 1 t \log \frac 1 {\beta_t}.
	\end{equation}
	In other words, $\beta_t$ decays with an exponent at least
	\begin{equation}\liminf_{t \to \infty}\sup_{\{f_1,\dots,f_t\} \in \cF_t(R)}   \frac 1 t \sum_{\tau = 1}^t D(P_{f_\tau}||Q_{f_\tau}).\end{equation}
	Recall that the set $\cF_t(R)$ also includes the supremal function $g_t$'s in the Donsker--Varadhan formulation \eqref{eqn:Donsker_varadhan}, thus we have
	\begin{equation}
	\begin{split}
	& \liminf_{t \to \infty} \frac 1 t \log \frac 1 {\beta_t}\\ & =\liminf_{t \to \infty} \sup_{\{f_1,\dots,f_t\} \in \cF_t(R)}  \frac 1 t \sum_{\tau = 1}^t E_P[S_\tau]-\log E_Q[e^{S_\tau}]\\
	& = \liminf_{t \to \infty} \sup_{\{f_1,\dots,f_t\} \in \cF_t(R)}  \frac 1 t \sum_{\tau = 1}^t E_P[S_\tau]-\log E_P[e^{S_\tau-L_\tau}]\\
	& = \lim_{t \to \infty} \theta_t(R).
	\end{split}
	\end{equation}
	
	\subsection{Proof of Lemma \ref{lem:dist_rate}}\label{app:dist_rate}
	We consider the inequality \eqref{eqn:lower_bnd_donsker} and obtain a more relaxed lower bound for it using $\log x \leq x-1$ as
	\begin{equation}
	\frac 1 t \log \frac 1 {\beta_t} \geq \frac 1 t \sum_{\tau = 1}^t \sup_{f_\tau \in \cF} E_P[S_\tau]- E_Q[e^{S_\tau}] + 1-\epsilon.
	\end{equation}
	It is known that $D(P||Q)$ can also be represented as
	\begin{equation}
	D(P||Q) = \sup_{g:\mathbb{R} \to \mathbb{R}} E_P[g(X)] - E_Q[e^{g(X)}]+1.
	\end{equation}
	Proceeding similarly to the proof of Theorem \ref{thm:theta}, we obtain $\theta_X(R) = \tilde \theta_X(R)$.
	
	\subsection{Proof of Lemma \ref{lem:delta_limit}}\label{app:delta_limit}
%	We follow a similar method in \cite{csizsar_extremum}. We make use of the following lemma.
%	\begin{lemma}[\cite{csizsar_extremum}, Lemma 2.1]\label{lem:csiszar}
%		Suppose $U$, $V_0$ and $V_1$ are real-valued random variables with $I(U;V_0),I(U;V_1)<\infty$ and suppose $\beta $ is a $\{0,1\}$-valued random variable. Let $V=V_{\beta}$. For $i \in \{0,1\}$, let
%		\begin{equation}
%		D_i := \log \frac{dP(U,V_i)}{d\big(P(U)P(V_i)\big)}
%		\end{equation}
%		be the logarithm of the Radon--Nikodym derivative of $P(U,V_i)$ with respect to $P(U)P(V_i)$.
%		Then
%		\begin{equation}
%		\big|I(U;V)-E[(1-\beta)D_0 + \beta D_1]\big| \leq 2h_b(p)
%		\end{equation}
%		where $p = \Pr\{\beta = 0\}$ and $h_b(p) := -p\log p -(1-p)\log(1-p)$ is the binary entropy.
%	\end{lemma}
	Consider the following quantization of $U$. Let $\cI_n := [-n,n]$ and $v_{n,k} := -n+k2^{-n}$ for $1\leq k\leq 2^{n+1}n$. If $U \in \cI_n$, then set $V$ as the closest $v_{n,k}$ to $U$. Otherwise, set $V =0$. Then,
	\begin{equation}
	E[e^{V-U}] \leq e^{2^{-n}}P(U\in \cI_n) + E[e^{-U}\indic\{U \notin \cI_n\}] 
	\end{equation}
	and
	\begin{equation}
	E[V-U] \geq -2^{-n}P(U\in \cI_n) - E[U\indic\{U \notin \cI_n\}].
	\end{equation}
	Since $E[e^{-U}] = 1$, $E[|U|] < \infty$ and $P(U\in \cI_n) \to 1$, the rightmost terms above tend to zero and for large $n$, 
	\begin{equation}
	\log E[e^{V-U}] - E[V-U]
	\end{equation}
is close to zero. The proof will be complete if $I(U;V) < \infty$ for finite $n$. But since $V$ has finite cardinality, this is indeed the case.

	\subsection{Boundedness of $C_U(r)$}\label{app:lipschitz}
	We give the definition of $C_U(r)$ suitable for our setting, adopted from \cite{Kostina}. Recall that the output of the lattice quantization takes values in $\{2kr\}_{k\in\mathbb{Z}}$. Let $I_k := [(2k-1)r,(2k+1)r]$. Then $C_U(r) := \sum_kP(I_k)\max_{I_k}v(u)$.
	
	Now, observe
	\begin{equation}
	E[v(U)] = \sum_k \int_{I_k} p(u)v(u)du.
	\end{equation}
	Since both $p$ and $v$ are continuous, by the mean value theorem, for each $k$ there exists a $c_k$ such that 
	\begin{equation}\int_{I_k} p(u)v(u)du = v(c_k)\int_{I_k} p(u)du.
	\end{equation}
	Suppose $v(u)$ is Lipschitz with constant $L$. Since $a_k := \arg\max_{u\in I_k} v(u)$ has distance at most $2r$ to $c_k$, we have $|a_k-c_k| \leq 2r$ and $v(a_k)\leq 2Lr + v(c_k)$. Then, 
	\begin{equation}
	C_U(r) \leq \sum_k (2Lr + v(c_k))P(I_k) = 2Lr + E[v(U)].
	\end{equation}
	Therefore, finiteness of $E[v(U)]$ guarantees the finiteness of $C_U$. Also observe that if $v(u) = |\frac{d}{du}\log p(u)|$ is Lipschitz, and if $J(U) < \infty$, then $C_U(r)$ is upper bounded as
	\begin{equation}C_U(r) \leq \sqrt{J(U)} + 2Lr\end{equation} 
	and is guaranteed to be finite.
	
	For the case when $v$ is not Lipschitz, but is differentiable and its derivative is Lipschitz with constant $L$, $C_U(r)$ can be upper bounded as
	\begin{equation}\label{eq:higher_lipschitz}
	C_U(r) \leq E[v(U)] + 2rE[|v'(U)|] + 4Lr^2
	\end{equation}
	and if $E[v(U)] $, $E[|v'(U)|]$ are finite, $C_U(r)$ can be bounded from above. Note that similar arguments generalize to higher order derivatives of $v$. 
	
	As an example, suppose $U$ has a density given by  $p(u) = K\exp(-|u|^3)$, with $K$ being the appropriate normalization constant. Then since $\frac{d}{du}\log p(u) = 3u^2$, one cannot find a Lipschitz $v$. Nevertheless, if one sets $v(u) = 3u^2$, $|v'(u)| = 6|u|$ is Lipschitz with constant $L=6$ and since $E[U^2]$, $E[|U|]$ are finite, $C_U(r) $ is bounded according to \eqref{eq:higher_lipschitz}.

\end{document}